\newcommand{\up}{\vspace{-0.2cm}}
\newtheorem{definition}{Definition}
\newtheorem{theorem}{Theorem}
\newtheorem{proposition}{Proposition}
\newtheorem{example}{Example}
\begin{document}
%
\title{Zero-one laws for provability logic: Axiomatizing\\ validity in almost all models and almost all frames}


\author{\IEEEauthorblockN{Rineke Verbrugge\\Department of Artificial Intelligence, 
University of Groningen, 
e-mail L.C.Verbrugge@rug.nl}
\IEEEauthorblockA{ }
}


%


\IEEEoverridecommandlockouts
\IEEEpubid{\makebox[\columnwidth]{978-1-6654-4895-6/21/\$31.00~
\copyright2021 IEEE \hfill} \hspace{\columnsep}\makebox[\columnwidth]{ }}
\maketitle

\begin{abstract}
It has been shown in the late 1960s that each formula of first-order logic without constants and function symbols obeys a zero-one law: As the number of elements of finite models increases, every formula holds either in almost all or in almost no models of that size. Therefore, many properties of models, such as having an even number of elements, cannot be expressed in the language of first-order logic. 
For modal logics, limit behavior for models and frames may differ. Halpern and Kapron proved zero-one laws for classes of models corresponding to the modal logics K, T, S4, and S5. They also proposed zero-one laws for the corresponding classes of frames, but their zero-one law for K-frames has since been disproved.

In this paper, we prove zero-one laws for provability logic with respect to both model and frame validity. Moreover, we axiomatize validity in almost all irreflexive transitive finite models and in almost all irreflexive transitive finite frames, leading to two different axiom systems. In the proofs, we use a combinatorial result by Kleitman and Rothschild about the structure of almost all finite partial orders. On the way, we also show that a previous result by Halpern and Kapron about the axiomatization of almost sure frame validity for S4 is not correct.  Finally, we consider the complexity of deciding whether a given formula is almost surely valid in the relevant finite models and frames.
\end{abstract}


%
\IEEEpeerreviewmaketitle

\section{Introduction}

In the late 1960s, Glebskii and colleagues proved that first-order logic without function symbols satisfies a zero-one law, that is, every formula is either almost always true or almost always false in finite models \cite{glebskii1969}. More formally, let $L$ be a language of first-order logic and let $A_n(L)$ be the set of all {\em labelled} $L$-models with universe $\{1, \ldots, n\}$. Now let $\mu_n(\sigma)$ be the fraction of members of $A_n(L)$ in which $\sigma$ is true, i.e.,
 \[\mu_n(\sigma) = \frac{ \mid \{ M \in A_n(L) : M \models \sigma \}\mid}{ \mid A_n(L)\mid}\] 
 Then for every $\sigma \in L$, $\lim_{n\to\infty} \mu_n(\sigma) = 1$ or
$\lim_{n\to\infty} \mu_n(\varphi) = 0$.\footnote{The distinction between labelled and unlabelled probabilities was introduced by Compton~\cite{compton1987}. The unlabelled count function counts the number of isomorphism types of size $n$, while the labelled count function counts the number of labelled structures of size $n$, that is, the number of relevant structures on the universe $\{1,\ldots,n\}$. It has been proved both for the general zero-one law and for partial orders that in the limit, the distinction between labelled and unlabelled probabilities does not make a difference for zero-one laws~\cite{fagin1976,compton1987,compton1988b}. Per finite size $n$, labelled probabilities are easier to work with than unlabelled ones~\cite{goranko2003}, so we will use them in the rest of the article.}

This was also proved later but independently by Fagin~\cite{fagin1976}; Carnap had already proved the zero-one law for first-order languages with only unary predicate symbols~\cite{Carnap1950} (see \cite{compton1988,goranko2003} for  nice historical overviews of zero-one laws).
 Later, Kaufmann showed that monadic existential second-order logic does not satisfy a zero-one law~\cite{kaufmann1987}. Kolaitis and Vardi have made the border more precise by showing that a zero-one law  holds for the fragment of existential second-order logic  ($\Sigma^1_1$) in which the first-order part of the formula belongs to the Bernays-Sch\"{o}nfinkel class ($\exists^\ast \forall^\ast$ prefix) or the Ackermann class ($\exists^\ast \forall\exists^\ast$ prefix)~\cite{kolaitis1987,kolaitis1990}; however, no zero-one law holds for any other class, for example, the G\"{o}del class ($ \forall^2\exists^\ast$ prefix)~\cite{pacholski1989}. 
 Kolaitis and Vardi proved that a zero-one law does hold for the infinitary finite-variable logic $\mathcal{L}^{\omega}_{\infty\omega}$, which implies that a zero-one law also holds for LFP(FO), the extension of first-order logic with a least fixed-point operator~\cite{kolaitis1992}.
 
The above zero-one laws and other limit laws have found applications in database theory~\cite{gurevich1993,halpern2006,libkin2013} and algebra~\cite{zaid2017}. In AI, there has been great interest in asymptotic conditional probabilities and their relation to default reasoning and degrees of belief~\cite{grove1996,halpern2006}. 

In this article, we focus on zero-one laws for a modal logic that imposes structural restrictions on its models, namely, provability logic, which is sound and complete with respect to finite strict (irreflexive) partial orders~\cite{segerberg1971}.

The zero-one law for first-order logic also holds when restricted to partial orders, both reflexive and irreflexive ones, as proved by Compton \cite{compton1988b}. To prove this, he used a surprising combinatorial result by Kleitman and Rothschild~\cite{kleitman1975} on which we will also rely for our results. Let us give a 
summary.\vspace{-0.08cm}



\subsection{Kleitman and Rothschild's result on finite partial orders}
\label{KR}

Kleitman and Rothschild proved that with asymptotic probability 1, finite partial orders have a very special structure: There are no chains  $u<v<w<z$ of more than three objects 
and the structure can be divided into three levels:\vspace{-0.13cm}
\begin{itemize}
\item $L_1$, the set of minimal elements;
\item $L_2$, the set of elements immediately succeeding elements in $L_1$;
\item $L_3$, the set of elements immediately succeeding elements in $L_2$.\vspace{-0.13cm}
\end{itemize}
Moreover, the ratio of the expected size of $L_1$ to $n$ and of the expected size of $L_3$ to $n$ are both $\frac{1}{4}$, while the ratio of the expected size of $L_2$ to $n$ is $\frac{1}{2}$.
As $n$ increases, each element in $L_1$ has as immediate successors asymptotically half of the elements of $L_2$ and each element in $L_3$ has as immediate predecessors asymptotically  half of the elements of $L_2$~\cite{kleitman1975}.\footnote{Interestingly, it was recently found experimentally that for smaller $n$ there are strong oscillations, while the behavior stabilizes only around $n=45$~\cite{Henson2017}.}  Kleitman and Rothschild's theorem holds both for reflexive (non-strict) and for irreflexive (strict) partial orders. 



\subsection{Zero-one laws for modal logics: Almost sure model validity}
In order to describe the known results about zero-one laws for modal logics with respect to the relevant classes of models and frames, we first give reminders of some well-known definitions and results.

\noindent
Let $\Phi=\{p_1,\ldots, p_k\}$ be a finite set of propositional atoms\footnote{In the rest of this paper in the parts on almost sure model validity, we take $\Phi$ to be finite, although the results can be extended to enumerably infinite $\Phi$ by the methods described in~\cite{halpern1994,grove1996}.} and let $L(\Phi)$ be the modal language over $\Phi$, inductively defined as the smallest set closed under:
\begin{enumerate}
\item If $p \in \Phi$, then $p\in L(\Phi)$.
\item If $A\in L(\Phi)$ and $B \in L(\Phi)$, then also $\neg A \in L(\Phi)$, $\Box A \in L(\Phi)$, $\Diamond(\varphi)\in L(\Phi)$, $(A \wedge B) \in L(\Phi)$, $(A \vee B)\in L(\Phi)$, and $(A \rightarrow B) \in L(\Phi)$.
\end{enumerate}

\noindent
A {\em Kripke frame} (henceforth: frame) is a pair $F=(W, R)$ where $W$ is a non-empty set of worlds and $R$ is a binary accessibility relation. A {\em Kripke model} (henceforth: model) $M=(W, R, V)$ consists of a frame $(W, R)$ and a valuation function $V$ that assigns to each atomic proposition in each world a truth value $V_w(p)$, which can be either 0 or 1. The truth definition is as usual in modal logic, including the clause:\vspace{-0.1cm}
\[M,w \models \Box \varphi \mbox{ if and only if}\]
\vspace{-0.6cm}
\[\mbox{for all } w' \mbox{ such that } wRw', M,w'\models \varphi.\vspace{-0.1cm}\]

\noindent
A formula $\varphi$ is valid in model  $M=(W, R, V)$ (notation $M \models \varphi$) iff for all $w\in W$, $M,w\models\varphi$. 

\noindent
A formula $\varphi$ is valid in  frame $F=(W, R)$ (notation $F\models\varphi$) iff for all valuations $V$, $\varphi$ is valid in the model $(W, R, V)$.\\
\noindent
Let $\mathcal{M}_{n,\Phi}$ be the set of finite models over $\Phi$ with set of worlds $W=\{1, \ldots, n\}$. We take $\nu_{n,\Phi}$ to be the uniform probability distribution on $\mathcal{M}_{n,\Phi}$. Let $\nu_{n,\Phi}(\varphi)$ be the measure in $\mathcal{M}_{n,\Phi}$ of the set of models in which $\varphi$ is valid.\\
\noindent
Let $\mathcal{F}_{n,\Phi}$ be the set of finite frames with set of worlds $W=\{1, \ldots, n\}$. We take $\mu_{n,\Phi}$ to be the uniform probability distribution on $\mathcal{F}_n$. Let $\mu_{n,\Phi}(\varphi)$ be the measure in $\mathcal{F}_n$ of the set of frames in which $\varphi$ is valid. \\

\noindent
Halpern and Kapron proved that every formula $\varphi$ in modal language $L(\Phi)$ is either valid in almost all models  (``almost surely true'') or not valid in almost all models (``almost surely false'')~\cite[Corollary 4.2]{halpern1994}: 
\[ \mbox{Either } \lim_{n\to\infty} \nu_{n,\Phi}(\varphi) = 0 \mbox{ or } \lim_{n\to\infty} \nu_{n,\Phi}(\varphi) = 1.\]

\noindent
In fact, this zero-one law for models already follows from the zero-one law for first-order logic~\cite{glebskii1969,fagin1976} by Van Benthem's translation method \cite{Benthem1976,Benthem1983}.\label{Benthem} As reminder, let $^\ast$ be given by: 
\begin{itemize}
\item $p_i^\ast = P_i(x)$ for atomic sentences $p_i\in \Phi$;
\item $(\neg \varphi)^\ast = \neg \varphi^\ast$;
\item $(\varphi \wedge \psi)^\ast = (\varphi^\ast \wedge \psi^\ast)$ (
similar for other binary operators);
\item $(\Box \varphi)^\ast = \forall y(Rxy \rightarrow \varphi^\ast [y/x])$.
\end{itemize}
Van Benthem mapped each model $M=(W,R,V)$ to a classical model $M^\ast$ with as objects the worlds in $W$ and the obvious binary relation $R$, while for each atom $p_i\in\Phi$, $P_i=\{ w \in W \mid M,w\models p_i \}= \{ w \in W \mid V_w(p_i)=1 \}$. 
Van Benthem then proved that for all $\varphi \in L(\Phi)$, $M\models \varphi$ iff $M^\ast \models \forall x \; \varphi^\ast$~\cite{Benthem1983}. Halpern and Kapron~\cite{halpern1992,halpern1994}  showed that a zero-one law for modal models immediately follows by Van Benthem's result and the zero-one law for first-order logic. 

By Compton's above-mentioned result that the zero-one law for first-order logic holds when restricted to the partial orders~\cite{compton1988b}, this modal zero-one law can also be restricted to finite models on reflexive or irreflexive partial orders, so that a zero-one law for finite models of provability logic immediately follows. However, one would like to prove a stronger result and axiomatize the set of formulas $\varphi$ for which $\lim_{n\to\infty} \nu_{n,\Phi}(\varphi) = 1$. Also, Van Benthem's result does not allow proving zero-one laws for classes of frames instead of models: We have $F\models \varphi$ iff $F^\ast \models \forall P_1 \ldots \forall P_n \forall x \varphi^\ast$, but the latter formula is not necessarily a negation of a formula in $\Sigma^1_1$ with its first-order part in one of the Bernays-Sch\"{o}nfinkel or Ackermann classes (see~\cite{halpern1994}).

Halpern and Kapron~\cite{halpern1992,halpern1994} aimed to fill in the above-mentioned gaps for the modal logics {\bf K},  {\bf T}, {\bf S4} and {\bf S5} (see~\cite{chellas1980} for definitions). They proved zero-one laws for the relevant classes of finite models for these logics. 
For all four, they axiomatized the classes of sentences that are almost surely true in the relevant finite models.

\subsection{The quest for zero-one laws for frame validity}
Halpern and Kapron's paper also contains descriptions of four zero-one laws with respect to the classes of finite frames corresponding to {\bf K},  {\bf T}, {\bf S4} and {\bf S5}.~\cite[Theorem 5.1 and Theorem 5.15]{halpern1994}:
Either $\lim_{n\to\infty} \mu_{n,\Phi}(\varphi) = 0$ or $\lim_{n\to\infty} \mu_{n,\Phi}(\varphi) = 1$.\\
They proposed  four axiomatizations for the sets of formulas that would be  almost always valid in the corresponding four classes of frames~\cite{halpern1994}. However, almost 10 years later, Le Bars surprisingly proved them wrong with respect to the zero-one law for {\bf K}-frames~\cite{Bars2002}. By proving that the formula $q \wedge \neg p \wedge \Box\Box((p\vee q) \rightarrow \neg\Diamond(p\vee q)) \wedge \Box\Diamond p$ does {\em not} have an asymptotic probability, he showed that in fact {\em no} zero-one law holds with respect to all finite frames. Doubt had already been cast on the zero-one law for frame validity by Goranko and Kapron, who proved 
that the formula $\neg \Box\Box(p \leftrightarrow \neg \Diamond p)$ fails in the countably infinite random frame, while it is almost surely valid in {\bf K}-frames~\cite{goranko2003}. (See also
~\cite[Section 9.5]{Goranko2007}).\footnote{We will show in this  paper that for irreflexive partial orders, almost-sure frame validity in the finite {\em does} transfer to validity in the corresponding countable random Kleitman-Rothschild frame, and that the validities are quite different from those for almost all {\bf K} frames (see Section~\ref{Frames}).}
Currently, the problem of axiomatizing the modal logic of almost sure frame validities for finite {\bf K}-frames appears to be open.\footnote{For up to 2006: see~\cite{Goranko2007}; for more recently:~\cite{Goranko2020} .}

As a reaction to Le Bars' counter-example, Halpern and Kapron~\cite{halpern2003erratum} published an erratum, in which they showed exactly where their erstwhile proof of~\cite[Theorem 5.1]{halpern1994} had gone wrong. It may be that the problem they point out also invalidates their  similar proof of the zero-one law with respect to finite reflexive frames, corresponding to {\bf T}~\cite[Theorem 5.15 a]{halpern1994}. However, with respect to frame validity for {\bf T}-frames, as far as we know, no counterexample to a zero-one law has yet been published and Le Bars' counterexample cannot easily be adapted to reflexive frames; therefore, the situation remains unsettled for {\bf T}.\footnote{Joe Halpern and Bruce Kapron (personal communication) and Jean-Marie Le Bars (personal communication) confirmed the current non-settledness of the problem for {\bf T}.} 

\subsection{Halpern and Kapron's axiomatization for almost sure frame validities for S4 fails}
Unfortunately, Halpern and Kapron's  proof of the 0-1 law for reflexive, transitive frames and the axiomatization of the almost sure frame validities for reflexive, transitive frames ~\cite[Theorem 5.16]{halpern1994} turn out to be incorrect as well, as follows.\footnote{The author of this paper discovered the counter-example after a colleague had pointed out that the author's earlier attempt at a proof of the 0-1 law for frames of provability logic, inspired by Halpern and Kapron's~\cite{halpern1994} axiomatiation, contained a serious gap.}  Halpern and Kapron introduce the axiom DEP2$'$ and they axiomatize almost-sure frame validities in reflexive transitive frames by {\bf S4}+DEP2$'$~\cite[Theorem 5.16]{halpern1994}, where DEP2$'$ is:$ \neg(p_1 \wedge \Diamond(\neg p_1 \wedge \Diamond (p_1 \wedge \Diamond \neg p_1))).$\\ 

\noindent
The axiom DEP2$'$ precludes $R$-chains $tRuRvRw$ of more than three different states.

\begin{proposition} Suppose $\Phi=\{p_1,p_2\}$. 
Now take the following sentence $\chi$:\up
\[\chi:= (p_1 \wedge \Diamond (\neg p_1 \wedge \Diamond p_1 \wedge \Box (p_1\rightarrow p_2))) \rightarrow \]\vspace{-0.6cm}
\[\Box((\neg p_1 \wedge \Diamond p_1) \rightarrow \Diamond\Box (p_1 \rightarrow  p_2))\vspace{-0.2cm}\]
Then {\bf S4}+DEP2$'\not \vdash\chi$ but $\lim_{n\to\infty} \mu_{n,\Phi}(\chi) = 1$
\end{proposition}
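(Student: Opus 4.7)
The plan splits the proposition into two parts.

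My approach to ${\bf S4}+\mathrm{DEP2}'\not\vdash\chi$ is to exhibit a reflexive, transitive Kripke frame $F$ of depth at most $3$ (on which $\mathrm{DEP2}'$ is automatically valid) together with a valuation refuting $\chi$; soundness then yields the non-derivability. I would take $F=(W,R)$ on $W=\{t,u,v,u^+,v^+\}$, with $R$ the reflexive, transitive closure of $\{(t,u),(t,v),(u,u^+),(v,v^+)\}$, and set $V_t(p_1)=1$, $V_u(p_1)=V_v(p_1)=0$, $V_{u^+}(p_1)=V_{u^+}(p_2)=1$, $V_{v^+}(p_1)=1$, $V_{v^+}(p_2)=0$. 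The antecedent of $\chi$ holds at $t$ with witness $u$, because $u$ sees only $u$ and $u^+$ and both satisfy $p_1\rightarrow p_2$; the consequent fails at $t$ via the $R$-successor $v$, since $v\models\neg p_1\wedge\Diamond p_1$ but neither $v$ nor its only proper successor $v^+$ satisfies $\Box(p_1\rightarrow p_2)$ (because $v^+\models p_1\wedge\neg p_2$).

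For the zero-one side, my plan is to combine Kleitman--Rothschild's structure theorem (Section~\ref{KR}) with an elementary probabilistic estimate. First, I would establish a structural lemma: on a reflexive, transitive Kleitman--Rothschild frame with levels $L_1,L_2,L_3$, $\chi$ fails under some valuation if and only if there exist $t\in L_1$ and distinct $u,v\in L_2$ with $tRu$, $tRv$ such that each of $u,v$ has at least one $L_3$-successor but they share none. The analysis mirrors the counterexample: the antecedent forces $t\in L_1$ and its witness $u\in L_2$ (no $L_3$-world can satisfy $\neg p_1\wedge\Diamond p_1$, and $t\models p_1$ rules out $t\in L_2\cup L_3$); failure of $\Diamond\Box(p_1\rightarrow p_2)$ at some $L_2$-successor $v$ of $t$ reduces to every $L_3$-successor of $v$ being valued $p_1\wedge\neg p_2$; and this is consistent with $u\models\Box(p_1\rightarrow p_2)\wedge\Diamond p_1$ precisely when $u$ and $v$ have no common $L_3$-successor.

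Second, I would bound the probability of such a bad pair. In the Kleitman--Rothschild model the $L_2$-to-$L_3$ edges are asymptotically independent Bernoulli$(\tfrac12)$ choices, so for a fixed pair $u,v\in L_2$ the probability that no $x\in L_3$ sits above both is $(3/4)^{|L_3|}=(3/4)^{\Theta(n)}$. A union bound over the $O(n^2)$ candidate pairs, together with the almost-sure existence of a common $L_1$-predecessor, gives an $o(1)$ bound on the bad event; adding the $o(1)$ probability of being outside the Kleitman--Rothschild class yields $\mu_{n,\Phi}(\chi)\to 1$. The main obstacle will be the first step: carefully ruling out every placement of $t,u,v$ across the three levels so that the semantic failure of $\chi$ genuinely reduces to the stated structural condition; once that reduction is in hand, the probabilistic bound is routine.
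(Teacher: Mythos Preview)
Your proposal is correct and follows essentially the same approach as the paper. The five-point countermodel is structurally identical to the paper's (Figure~\ref{treecounter}), and the almost-sure validity argument rests on the same confluence fact: in almost every Kleitman--Rothschild frame, any two middle-layer points share a top-layer successor. The only cosmetic difference is packaging: the paper invokes Compton's extension axiom~(b) and then gives a direct forward semantic argument (from the antecedent at $w$ and the witness $w_1$, pick any $w_2$ and use confluence to find a common $L_3$-successor $w_3$, which inherits $\Box(p_1\to p_2)$), whereas you first extract an explicit iff-characterisation of failure and then re-derive the needed instance of~(b) by a union bound. Your route is slightly more self-contained; the paper's is shorter since it cites the extension axiom. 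One small remark: the clause ``together with the almost-sure existence of a common $L_1$-predecessor'' is not needed for the upper bound---your union bound over all $L_2$-pairs already shows that almost surely \emph{every} pair has a common $L_3$-successor, which by your lemma forces validity of $\chi$ regardless of predecessors.
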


\begin{proof}
It is easy to see that  {\bf S4}+DEP2$'\not \vdash\chi$ by taking the five-point reflexive transitive frame of Figure~\ref{treecounter}, where\up \[M,w_0\models p_1 \wedge \Diamond (\neg p_1 \wedge \Diamond p_1 \wedge \Box(p_1\rightarrow p_2))\up\] but $M, w_3 \not \models (\neg p_1 \wedge \Diamond p_1) \rightarrow \Diamond \Box(p_1 \rightarrow p_2)$, so\up \[M,w_0 \not \models  \Box((\neg p_1 \wedge \Diamond p_1) \rightarrow \Diamond \Box(p_1 \rightarrow p_2)).\] 

Now we sketch a proof that $\chi$ is valid in almost all reflexive Kleitman-Rothschild frames.\footnote{Halpern and Kapron~\cite[Theorem 4.14]{halpern1994} proved that almost surely, every reflexive transitive relation is in fact a partial order, so the Kleitman-Rothschild result also holds for finite frames with reflexive transitive relations.}
So let $M=(W,R,V)$ be an arbitrary large enough (with appropriate extension axioms holding) Kleitman-Rothschild frame $(W, R)$ together with an arbitrary valuation $V$. 
Let $w$ be arbitrary in $W$ and suppose $M,w\models p_1\wedge\Diamond(\neg p_1\wedge\Diamond p_1\wedge \Box(p_1\rightarrow p_2))$. 
Then there is a $w_1\in W$ with $wRw_1$ and $M,w_1\models \neg p_1\wedge\Diamond p_1\wedge \Box(p_1\rightarrow p_2)$. 
We want to show that 
$M,w\models\Box((\neg p_1 \wedge \Diamond p_1) \rightarrow \Diamond \Box(p_1 \rightarrow p_2))$. 
To do this, suppose $w_2$ is arbitrary in $W$ with $wRw_2$ and $M,w_2\models \neg p_1 \wedge \Diamond p_1$. 
The above facts imply that both $w_1$ and $w_2$ are in the middle layer and $w$ is in the bottom layer. 
Then almost surely, there is a $w_3$ in the top layer with $w_1Rw_3$ and $w_2Rw_3$. This confluence follows from Compton’s extension axiom (b)~\cite{compton1988b} (similar to (b) in Proposition~\ref{Axalmost} of the current paper). 
Therefore by $M,w_1\models  \Box(p_1 \rightarrow p_2)$, also $M,w_3\models \Box(p_1 \rightarrow p_2)$, so $M,w_2\models\Diamond\Box(p_1 \rightarrow p_2)$. 
Therefore $M,w\models\Box((\neg p_1 \wedge \Diamond p_1) \rightarrow \Diamond \Box(p_1 \rightarrow p_2))$. 
Now, because $w\in W$ was arbitrary, we have $M\models\chi$. \end{proof}

\begin{figure}
  \begin{minipage}[c]{0.5\textwidth}
  \begin{center}
   \begin{tikzpicture}[x=1.5cm, y=1.4cm]
\node[label=right:{$p_1,\neg p_2$}](mp) at  (1,1) {$w_0$};
\node[label=left:{$\neg p_1, p_2$}] (upq) at  (0,2) {$w_1$};
\node[label=right:{$\neg p_1,\neg p_2$}](uq) at  (2,2) {$w_3$};
\node[label=left:{$ p_1, p_2$}]  (tl) at  (0,3) {$w_2$};
\node[label=right:{$p_1,\neg p_2$}](tr) at  (2,3) {$w_4$};
\draw[->] (mp) -- (upq);
\draw[->] (mp) -- (uq);
\draw[->] (upq) -- (tl);
\draw[->] (uq) -- (tr);
\end{tikzpicture}
\end{center}
  \end{minipage}\hfill
  \begin{minipage}[c]{0.49\textwidth}
    \caption{Counter-model showing that 
    the formula $\chi:=$
    $(p_1 \wedge \Diamond (\neg p_1 \wedge \Diamond p_1 \wedge \Box (p_1\rightarrow p_2))) \rightarrow 
\Box((\neg p_1 \wedge \Diamond p_1) \rightarrow \Diamond\Box (p_1 \rightarrow  p_2))$
 does not hold in $w_0$ of this three-layer model. The relation in the model is the reflexive transitive closure of the one represented by the arrows. }
    \label{treecounter}
  \end{minipage}
  \end{figure}
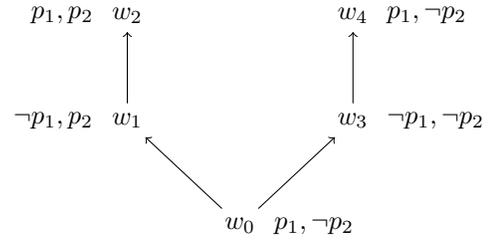

\noindent
Therefore, the axiom system given in ~\cite[Theorem 5.16]{halpern1994} is {\em not} complete with respect to almost-sure frame validities for finite reflexive transitive orders.
Fortunately, it seems possible to mend the situation and still obtain an axiom system that is sound and complete with respect to almost sure $\mathcal{S}$4 frame validity,  by adding extra axioms characterizing the umbrella- and diamond properties that we will also use for the provability logic {\bf GL} in Section~\ref{Frames}; the $\mathcal{S}$4 version is future work.

\subsection{Almost sure model validity does not coincide with almost sure frame validity}
Interestingly, whereas for full classes of frames, `validity in {\em all} finite models' coincides with `validity in {\em all} finite frames' of the class, this is not the case for `almost sure validity'. In particular, for both the class of reflexive transitive frames ($\mathcal{S}$4) and the class of reflexive transitive symmetric frames ($\mathcal{S}$5), there are many more formulas that are `valid in {\em almost all} finite models' than `valid in {\em almost all} finite frames' of the appropriate kinds. 
Our work has been greatly inspired by Halpern and Kapron's paper~\cite{halpern1994} and we also use some of the previous results that they applied, notably the above-mentioned combinatorial result by Kleitman and Rothschild about finite partial orders.\\

 
 \noindent
 The rest of this paper is structured as follows. In Section~\ref{Provability-intro}, we give a brief reminder of the axiom system and semantics of provability logic. In the central Sections~\ref{GL-models},~\ref{Random} and~\ref{Frames}, we show why provability logic obeys zero-one laws both with respect to its models and with respect to its  frames. We provide two axiom systems characterizing the formulas that are almost always valid in the relevant models, respectively almost always  valid in the relevant frames. When discussing almost sure frame validity, we will investigate both the almost sure validity in finite irreflexive transitive frames and validity in the countable random Kleitman-Rothschild frame, and show that there is transfer between them. Section~\ref{Complexity}  provides a sketch of the complexity of the decidability problems of almost sure model and almost sure frame validity for provability logic.
 Finally, Section~\ref{Discussion} presents a conclusion and some questions for future work. 
 
 The result on models in Section~\ref{GL-models} was proved 26 years ago, and presented in \cite{Verbrugge1995,verbrugge2018}, but the proofs have not been published before in an archival venue. 
 The results about almost sure frame validities for {\bf GL} are new, as well as the counter-example against the axiomatization by Halpern and Kapron of almost sure $\mathcal{S}$4 frame validities.\footnote{Due to the length restriction, this paper includes proof sketches of the main results. Full proofs are to be included in an extended version for a journal.}
 
 \section{Provability logic}
 \label{Provability-intro}
\label{GL}

In this section, a brief reminder is provided about the protagonist of this paper: the provability logic {\bf GL},
named after G\"{o}del and L\"{o}b. As axioms, it contains all axiom schemes from $\mathbf{K}$ and the extra scheme GL. Here follows the full set of axiom schemes of $\mathbf{GL}$:\vspace{-0.14cm}
\begin{align}
\tag{A1}
&\text{All (instances of) propositional tautologies} \label{eq:A1}\\
\tag{A2}
&\square(\varphi\rightarrow \psi) \rightarrow (\square\varphi \rightarrow \square\psi) \label{eq:A2}\\
\tag{GL}
&\square (\square \varphi \rightarrow \varphi)\rightarrow \square \varphi \vspace{-0.16cm}
\label{eq:GL}
\end{align}
 The rules of inference are modus ponens and necessitation:\vspace{-0.02cm}
 \begin{quote}
 if $\mathbf{GL}\vdash \varphi\rightarrow \psi$ and $\mathbf{GL}\vdash \varphi$, then $\mathbf{GL}\vdash\varphi$.\\
 if $\mathbf{GL}\vdash \varphi$, then $\mathbf{GL}\vdash \square\varphi$.
 \end{quote}
  Note that the transitivity axiom $\square \varphi \rightarrow \square \square \varphi$ follows from $\mathbf{GL}$, which was first proved by De Jongh and Sambin~\cite{boolos1993,Verbrugge2017}, but that the reflexivity axiom $\square \varphi \rightarrow \varphi$ does not follow.
Indeed, Segerberg proved in 1971 that provability logic is sound and complete with respect to all transitive, {\em converse well-founded} frames (i.e., for each non-empty set $X$, there is an R-greatest element of $X$; or equivalently: there is no infinitely ascending sequence $x_1Rx_2Rx_3Rx_4, \ldots$). Segerberg also proved completeness with respect to all finite, transitive, irreflexive frames~\cite{segerberg1971}. 
The latter soundness and completeness result will be relevant for our purposes. For more information on provability logic, see, for example,~\cite{smorynski1985,boolos1993,Verbrugge2017}.

\label{Zero-one-provability}

In the next three sections, we provide axiomatizations, first for almost sure model validity and then for almost sure frame validity, with respect to the relevant finite frames corresponding to {\bf GL}, namely the irreflexive transitive ones.

For the proofs of the zero-one laws for almost sure model and frame validity, we will need completeness proofs of the relevant axiomatic theories -- let us refer to such a theory by $\mathbf{S}$ for the moment -- with respect to almost sure model validity and with respect to almost sure frame validity. Here we will use Lindenbaum's lemma and maximal $\mathbf{S}$-consistent sets of formulas. For such sets, the following useful properties hold, as usual~\cite{segerberg1971,chellas1980}: 

\begin{proposition}
\label{maxconsistent}
Let $\Theta$ be a maximal $\mathbf{S}$-consistent set of formulas in $L(\Phi)$. Then for each pair of formulas $\varphi, \psi \in L(\Phi)$:
                        \begin{enumerate}
                                \item  $\varphi \in \Theta$ iff $\neg \, \varphi \not\in \Theta$;
                                \item $(\varphi \wedge \psi) \in \Theta \Leftrightarrow \varphi \in \Theta$ and $\psi \in \Theta$;
                                \item if $\varphi \in \Theta$ and $(\varphi \rightarrow \psi) 
                                \in \Theta$ then $\psi \in \Theta$;
                                \item if $\Theta \vdash_{\mathbf{S}} \; \varphi$ then $\varphi \in \Theta$.
                        \end{enumerate}
\end{proposition}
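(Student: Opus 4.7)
The plan is to derive all four clauses from two basic ingredients: (i) $\mathbf{S}$ contains classical propositional logic and is closed under modus ponens, so $\vdash_{\mathbf{S}}$ behaves classically on Boolean combinations; and (ii) $\Theta$ is by hypothesis both consistent (does not prove $\bot$) and maximal (any proper extension is inconsistent). The natural order is to prove clause (4) first, since it is the workhorse, then clause (1), then clauses (2) and (3), which are quick consequences.

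For clause (4), suppose $\Theta \vdash_{\mathbf{S}} \varphi$ but $\varphi \notin \Theta$. I would consider the set $\Theta \cup \{\varphi\}$; by maximality it must be inconsistent, so $\Theta \cup \{\varphi\} \vdash_{\mathbf{S}} \bot$. By the deduction theorem (available because of (A1) and modus ponens), $\Theta \vdash_{\mathbf{S}} \varphi \to \bot$, i.e., $\Theta \vdash_{\mathbf{S}} \neg\varphi$. Combined with the hypothesis $\Theta \vdash_{\mathbf{S}} \varphi$, this gives $\Theta \vdash_{\mathbf{S}} \bot$, contradicting consistency of $\Theta$. Hence $\varphi \in \Theta$.

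For clause (1), the forward direction is consistency: if both $\varphi$ and $\neg\varphi$ lay in $\Theta$, then $\Theta \vdash_{\mathbf{S}} \bot$. For the backward direction, suppose $\neg\varphi \notin \Theta$; then by maximality $\Theta \cup \{\neg\varphi\} \vdash_{\mathbf{S}} \bot$, hence by the deduction theorem $\Theta \vdash_{\mathbf{S}} \neg\neg\varphi$, hence $\Theta \vdash_{\mathbf{S}} \varphi$ by a propositional tautology, and clause (4) now gives $\varphi \in \Theta$. Clauses (2) and (3) follow the same pattern: the propositional tautologies $(\varphi \wedge \psi) \to \varphi$, $(\varphi \wedge \psi) \to \psi$, $\varphi \to (\psi \to (\varphi \wedge \psi))$, and $(\varphi \to \psi) \to (\varphi \to \psi)$ (for modus ponens) together with clause (4) immediately give the desired closure properties.

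The only subtle point — really the only thing that is not purely mechanical — is the appeal to the deduction theorem, which must be invoked in its purely propositional form (no necessitation inside the deduction); since all the axiom schemes of $\mathbf{S}$ we rely on are (A1) and modus ponens for the Boolean reasoning, this is unproblematic. Everything else is bookkeeping on top of consistency plus maximality, and no specifically modal feature of $\mathbf{S}$ (in particular, no appeal to (A2), (GL), or the necessitation rule) is needed.
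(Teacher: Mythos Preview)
Your proof is correct and is precisely the standard textbook argument. The paper itself does not supply a proof of this proposition at all: it merely states the properties and cites Segerberg~\cite{segerberg1971} and Chellas~\cite{chellas1980}, so your write-up (including the careful remark about using only the propositional form of the deduction theorem, avoiding any interference from necessitation) is in fact more detailed than what the paper provides.
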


\section{Validity in almost all finite irreflexive transitive models}
\label{GL-models}

The axiom system $\mathbf{AX^{\Phi,M}_{GL}}$ has the same axioms and rules as {\bf GL} (see Section~\ref{GL}) plus the following axioms:
\begin{align}
\tag{T3} 
&\Box\Box\Box\bot \label{Threelayers}\\
\tag{C1}
& \Diamond \top \rightarrow \Diamond A \label{Carnap1}\\
\tag{C2}
& \Diamond \Diamond \top \rightarrow \Diamond(B \wedge \Diamond C) \label{Carnap2}
\end{align}

\noindent

\noindent
In the axiom schemes C1 and C2, the formulas $A$, $B$ and $C$ all stand for consistent conjunctions of literals over $\Phi$. 

These axiom schemes have been inspired by Carnap's consistency axiom: $\Diamond \varphi$ for any $\varphi$ that is a consistent propositional formula~\cite{Carnap1947}, which has been used by Halpern and Kapron~\cite{halpern1994} for axiomatizing almost sure model validities for $\mathcal{K}$-models. 

Note that $\mathbf{AX^{\Phi,M}_{GL}}$  is not a normal modal logic, because one cannot substitute just any formula for $A, B, C$; for example, substituting $p_1\wedge \neg p_1$ for $A$ in C1 would make that formula equivalent to $\neg\Diamond\top$, which is clearly undesired. However, even though $\mathbf{AX^{\Phi,M}_{GL}}$ is not closed under uniform substitution, it is still a propositional theory, in the sense that it is closed under modus ponens.

\begin{example}
For $\Phi=\{p_1, p_2\}$, the axiom scheme C1 boils down to the following four axioms:
\begin{align}
& \Diamond \top \rightarrow \Diamond (p_1 \wedge p_2)\\
& \Diamond \top \rightarrow \Diamond (p_1 \wedge \neg p_2)\\
&\Diamond \top \rightarrow \Diamond (\neg p_1 \wedge p_2)\\
&\Diamond \top \rightarrow \Diamond (\neg p_1 \wedge \neg p_2)
 \end{align}
The axiom scheme C2 covers 16 axioms, corresponding to the $2^4$ possible choices of positive or negative literals, as captured by the following scheme, where ``$[\neg]$'' is shorthand for a negation being present or absent at the current location:
\[\Diamond\Diamond \top \rightarrow \Diamond ([\neg]p_1 \wedge [\neg]p_2  \wedge \Diamond([\neg]p_1 \wedge [\neg]p_2))\]
\end{example}

\noindent
The following definition of the canonical asymptotic model over a finite set of propositional atoms $\Phi$ is based on the set of propositional valuations on $\Phi$, namely, the functions $v$ from the set of propositional atoms $\Phi$ to the set of truth values $\{0,1\}$. As worlds, we introduce for each such valuation $v$ three new distinct objects, for mnemonic reasons called $u_v$ (Upper), $m_v$ (Middle), and $b_v$ (Bottom); see Figure 2.

\begin{definition} 
\label{Canonical-GL}
Define $\mathrm{M}^{\Phi}_{GL}= (W, R, V)$, the {\em canonical asymptotic model} over $\Phi$, with $W, R, V$ as follows:\\
$W= \{b_v \mid v \mbox{ a propositional valuation on } \Phi \} \cup \\
\mbox{   \hspace{0.55cm}    } \{m_v \mid v \mbox{ a propositional valuation on } \Phi \} \cup \\
\mbox{   \hspace{0.55cm}    }  \{u_v \mid v \mbox{ a propositional valuation on } \Phi \}$\\
 $R=\{\langle b_v, m_{v'}\rangle \mid v, v' \mbox{ propositional valuations on } \Phi \} \cup \\
\mbox{   \hspace{0.55cm}    }\{\langle m_v, u_{v'}\rangle \mid v, v' \mbox{ propositional valuations on } \Phi \} \cup \\
\mbox{   \hspace{0.55cm}    }  \{\langle b_v, u_{v'}\rangle \mid v, v' \mbox{ propositional valuations on } \Phi \}$; \\ 
and
for all $p_i\in\Phi$ and all propositional valuations $v$ on $\Phi$, the modal valuation $V$ is defined by:\\
$V_{b_v} (p_i) = V_{m_v} (p_i) = V_{u_v} (p_i) = v(p_i)$.\footnote{ If $\Phi$ were enumerably infinite, the definition could be adapted so that precisely those propositional valuations are used that make only finitely many propositional atoms true, see also~\cite{halpern1994}.}
\end{definition}
 
 
 \begin{figure*}
 \label{canonical16}
 \begin{tikzpicture}[x=4.9cm, y=2cm]
\node[label=below:{$p_1,p_2$}] (bpq) at  (0,0) {$b_{v_1}$};
\node[label=below:{$p_1, \neg p_2$}](bp) at  (1,0) {$b_{v_2}$};
\node[label=below:{$\neg p_1, p_2$}] (bq) at  (2,0) {$b_{v_3}$};
\node[label=below:{$\neg p_1, \neg p_2$}] (b) at  (3,0) {$b_{v_4}$};
\node[label=left:{$p_1,p_2$}] (mpq) at  (0,1) {$m_{v_1}$};
\node[label=left:{$p_1, \neg p_2$}] (mp) at  (1,1) {$m_{v_2}$};
\node[label=right:{$\neg p_1, p_2$}] (mq) at  (2,1) {$m_{v_3}$};
\node[label=right:{$\neg p_1, \neg p_2$}] (m) at  (3,1) {$m_{v_4}$};
\node[label=above:{$p_1,p_2$}]  (upq) at  (0,2) {$u_{v_1}$};
\node[label=above:{$p_1, \neg p_2$}] (up) at  (1,2) {$u_{v_2}$};
\node[label=above:{$\neg p _1, p_2$}] (uq) at  (2,2) {$u_{v_3}$};
\node[label=above:{$\neg p _1, \neg p_2$}] (u) at  (3,2) {$u_{v_4}$};
\draw[->] (bpq) -- (mpq);
\draw[->] (bpq) -- (mp);
\draw[->] (bpq) -- (mq);
\draw[->] (bpq) -- (m);
\draw[->] (bp) -- (mpq);
\draw[->] (bp) -- (mp);
\draw[->] (bp) -- (mq);
\draw[->] (bp) -- (m);

\draw[->] (bq) -- (mpq);
\draw[->] (bq) -- (mp);
\draw[->] (bq) -- (mq);
\draw[->] (bq) -- (m);

\draw[->] (b) -- (mpq);
\draw[->] (b) -- (mp);
\draw[->] (b) -- (mq);
\draw[->] (b) -- (m);

\draw[->] (mpq) -- (upq);
\draw[->] (mpq) -- (up);
\draw[->] (mpq) -- (uq);
\draw[->] (mpq) -- (u);
\draw[->] (mp) -- (upq);
\draw[->] (mp) -- (up);
\draw[->] (mp) -- (uq);
\draw[->] (mp) -- (u);

\draw[->] (mq) -- (upq);
\draw[->] (mq) -- (up);
\draw[->] (mq) -- (uq);
\draw[->] (mq) -- (u);

\draw[->] (m) -- (upq);
\draw[->] (m) -- (up);
\draw[->] (m) -- (uq);
\draw[->] (m) -- (u);
\end{tikzpicture}
\caption{The canonical asymptotic model $\mathrm{M}^{\Phi}_{GL}= (W, R, V)$ for $\Phi=\{p_1, p_2\}$, defined in Definition~\ref{Canonical-GL}. The accessibility relation is the transitive closure of the relation given by the arrows drawn in the picture. The four relevant valuations are $v_1, v_2, v_3, v_4$, given by $v_1(p_1)=v_1(p_2)=1$; $v_2(p_1)=1, v_2(p_2)=0$; $v_3(p_1)=0, v_3(p_2)=1$; $v_4(p_1)=v_4(p_2)=0$.} 
\end{figure*}
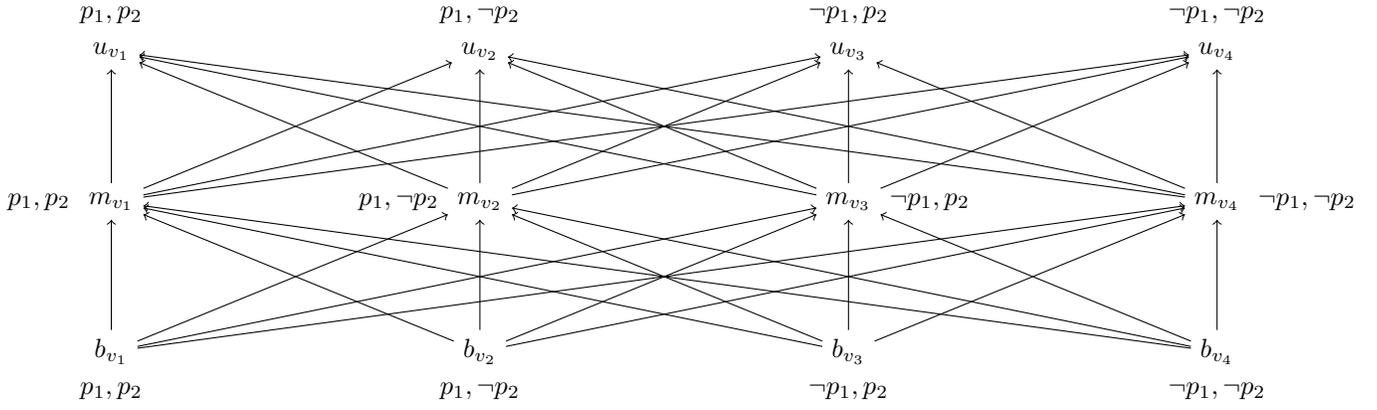

\noindent
For the proof of the zero-one law for model validity, we will need a completeness proof of $\mathbf{AX^{\Phi,M}_{GL}}$ with respect to almost sure model validity, including use of Lindenbaum's lemma and Proposition~\ref{maxconsistent}, applied to 
$\mathbf{AX^{\Phi,M}_{GL}}$.
The zero-one law for model validity follows directly from the following theorem:
\begin{theorem}
For every formula $\varphi \in L(\Phi)$, the following are equivalent:
\begin{enumerate}
\item $\mathrm{M}^{\Phi}_{GL}\models \varphi$;
\item $\mathbf{AX^{\Phi,M}_{GL}} \vdash \varphi$;
\item $\lim_{n\to\infty} \nu_{n,\Phi}(\varphi) = 1$;
\item $\lim_{n\to\infty} \nu_{n,\Phi}(\varphi) \not = 0$.
\end{enumerate}
\end{theorem}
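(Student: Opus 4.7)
My plan is to close the cycle $(2) \Rightarrow (1) \Rightarrow (3) \Rightarrow (4) \Rightarrow (2)$; the implication $(3) \Rightarrow (4)$ is immediate since $1 \neq 0$. For $(2) \Rightarrow (1)$ I check soundness: each axiom of $\mathbf{AX^{\Phi,M}_{GL}}$ holds in $\mathrm{M}^{\Phi}_{GL}$. The \textbf{GL} axioms hold because the canonical frame is finite, transitive and irreflexive; T3 holds because every $R$-chain in $\mathrm{M}^{\Phi}_{GL}$ has length at most three; and C1, C2 hold because each non-terminal canonical world sees successors realizing every propositional valuation at every strictly higher layer, including, for C2, every nested pair.

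The central step is $(1) \Rightarrow (3)$. I would use Kleitman--Rothschild together with Compton's extension axioms~\cite{compton1988b} to argue that, for fixed $\varphi$, a uniformly random $M \in \mathcal{M}_{n,\Phi}$ almost surely satisfies: (i) $M$ has the three-layer Kleitman--Rothschild structure; (ii) for every world $x$ in the bottom or middle layer and every valuation $v$ on $\Phi$, $x$ has an $R$-successor in the next higher layer whose valuation is $v$; and (iii) for every bottom world $x$ and every pair of valuations $v,v'$, $x$ has a middle successor with valuation $v$ that in turn has a top successor with valuation $v'$. On this high-probability event, a routine induction on the structure of $\varphi$ shows that for every world $x$ of $M$, $M,x \models \varphi$ iff $\mathrm{M}^{\Phi}_{GL}, \hat{x} \models \varphi$, where $\hat{x}$ is the canonical world of the same layer and valuation as $x$; here C2 is exactly what propagates the induction across bottom worlds. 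Hence $\mathrm{M}^{\Phi}_{GL} \models \varphi$ forces $M \models \varphi$ almost surely, giving $(1) \Rightarrow (3)$; symmetrically, if $\mathrm{M}^{\Phi}_{GL} \not\models \varphi$ then $M$ almost surely contains a world witnessing $\neg \varphi$, so $\lim_{n\to\infty} \nu_{n,\Phi}(\varphi) = 0$.

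For $(4) \Rightarrow (2)$ I argue by contrapositive through completeness with respect to $\mathrm{M}^{\Phi}_{GL}$. Suppose $\mathbf{AX^{\Phi,M}_{GL}} \not\vdash \varphi$, so $\{\neg \varphi\}$ is consistent; apply Lindenbaum and Proposition~\ref{maxconsistent} to extend it to a maximal consistent set $\Theta$. The axiom T3 together with the transitivity derivable from \textbf{GL} forces $\Theta$ to have a unique rank $r \in \{0,1,2\}$, determined by which of $\Box\bot$, $\Box\Box\bot$ lie in $\Theta$, while the literals in $\Theta$ fix its propositional valuation; together these identify a canonical world $\hat{\Theta}$. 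A truth lemma $\psi \in \Theta$ iff $\mathrm{M}^{\Phi}_{GL}, \hat{\Theta} \models \psi$, proved by induction with C1 and C2 supplying the needed $\Diamond$-witnesses, then gives $\mathrm{M}^{\Phi}_{GL} \not\models \varphi$, and the second half of $(1) \Rightarrow (3)$ yields $\lim_{n\to\infty} \nu_{n,\Phi}(\varphi) = 0$, contradicting $(4)$. I expect the main obstacle to be the bookkeeping in $(1) \Rightarrow (3)$: one must verify that the finitely many extension properties indexed by layer, valuation and pair of valuations truly hold jointly with probability tending to $1$, and then check that the bisimulation induction matching random worlds to canonical worlds goes through uniformly across all worlds of the random model, not just at a single distinguished point.
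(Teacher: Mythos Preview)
Your proposal is correct and uses the same ingredients as the paper---Lindenbaum/maximal-consistent-set completeness, soundness, and a bisimulation-type link between random Kleitman--Rothschild models and the canonical model $\mathrm{M}^{\Phi}_{GL}$---but you arrange them in the opposite cycle. The paper proves $1\Rightarrow 2\Rightarrow 3\Rightarrow 4\Rightarrow 1$: it places the completeness argument in $1\Rightarrow 2$ (building the canonical model from maximal consistent sets and observing it is isomorphic to $\mathrm{M}^{\Phi}_{GL}$), checks almost-sure soundness of C1 and C2 directly in $2\Rightarrow 3$, and for $4\Rightarrow 1$ constructs a bisimulation from $\mathrm{M}^{\Phi}_{GL}$ \emph{into} a large random model by locating, for each canonical world, a specific matching world. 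Your route reverses the cycle and, more interestingly, reverses the direction of the bisimulation: your map $x\mapsto\hat{x}$ is a surjective bounded morphism from the random model \emph{onto} $\mathrm{M}^{\Phi}_{GL}$, which is somewhat cleaner---no need to single out witness worlds---and gives both directions $M,x\models\psi \Leftrightarrow \mathrm{M}^{\Phi}_{GL},\hat{x}\models\psi$ simultaneously, so that $(1)\Leftrightarrow(3)\Leftrightarrow(4)$ drops out at once. Your almost-sure properties (ii) and (iii) are exactly the semantic content of C1 and C2, and their verification is the same union-bound computation the paper does in $2\Rightarrow 3$; in fact (ii) alone already yields the back condition, since transitivity handles the bottom-to-top case, so (iii) is redundant. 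Your $(4)\Rightarrow(2)$ via completeness is essentially the paper's $1\Rightarrow 2$ in contrapositive form.
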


\begin{proof} We show a circle of implications. Let $\varphi\in L(\Phi)$.\\

\noindent
{\bf 1 $\Rightarrow$ 2}\\
 By contraposition. Suppose that $\mathbf{AX^{\Phi,M}_{GL}} \not \vdash \varphi$, then $\neg \varphi$ is $\mathbf{AX^{\Phi,M}_{GL}}$-consistent. By Lindenbaum's lemma, we can extend $\{\neg \varphi\}$ to a maximal $\mathbf{AX^{\Phi,M}_{GL}}$-consistent set $\Delta$ over $\Phi$. We use a standard canonical model construction; here, we illustrate how that works for the finite set $\Phi = \{p_1, p_2\}$, but the method works for any finite $\Phi=\{p_1, \ldots, p_k\}$.\footnote{For adapting to the enumerably infinite case, see~\cite[Theorem 4.15]{halpern1994}.} It will turn out that the model we define is isomorphic to the model of Definition~\ref{Canonical-GL}. Let us define the model  $\mathrm{MC}^{\Phi}_{GL}= (W', R', V')$:

\begin{itemize}
\item $W'= \{ w_{\Gamma} \mid \Gamma \mbox { is maximal } \mathbf{AX^{\Phi,M}_{GL}}\mbox{-consistent,}$\\
$\mbox{ } \hspace{2cm}\mbox{    based on } \Phi \}$.
\item $R' = \{ \langle w_{\Gamma_1}, w_{\Gamma_2}\rangle \mid w_{\Gamma_1}, w_{\Gamma_2}\in W' \mbox{ and } $\\$ \mbox{ }\hspace{2cm}\mbox{ for all } \Box\psi\in \Gamma_1, \mbox{ it holds that } \psi \in \Gamma_2 \}$
\item For each $w_{\Gamma} \in W': V'_{w_{\Gamma}}(p)= 1 \mbox{ iff } p\in \Gamma$
\end{itemize}

\noindent
Because the worlds of this model correspond to the maximal $\mathbf{AX^{\Phi,M}_{GL}}$-consistent sets, 
all worlds $w_{\Gamma}\in W'$  can be distinguished into three kinds, exhaustively and without overlap:

\begin{description}
\item[U] $\Box\bot \in \Gamma$; there are exactly four maximal consistent sets $\Gamma$ of this form, determined by which of the four conjunctions of relevant literals $[\neg] p_1 \wedge [\neg] p_2$ is an element. These comprise the upper level U of the model.
\item[M] $\neg \Box\bot \in \Gamma$ and $\Box\Box\bot \in \Gamma$;  there are exactly four maximal consistent sets $\Gamma$ of this form, determined by which of the four conjunctions of relevant literals $[\neg] p_1 \wedge [\neg] p_2$ is an element. By axiom C1 and Proposition~\ref{maxconsistent}, all these four maximal consistent sets contain the four formulas of the form $\Diamond([\neg]p_1 \wedge [\neg] p_2)$; by definition of $R'$ and using the fact that $\Box\Box\bot \in \Gamma$, this means that all the four worlds in this middle level M will have access to all the four worlds in the upper level U.
\item[B] $\neg \Box\bot \in \Gamma$ and $\neg\Box\Box\bot \in \Gamma$ and $\Box\Box\Box\bot \in \Gamma$;  there are exactly four maximal consistent sets $\Gamma$ of this form, determined by which of the four conjunctions of relevant literals $[\neg] p_1 \wedge [\neg] p_2$ is an element.  Because $\Diamond\Diamond\top \in \Gamma$, by axiom C2 and Proposition~\ref{maxconsistent}, all these four maximal consistent sets contain the 16 formulas $\Diamond([\neg]p_1 \wedge [\neg] p_2 \wedge \Diamond([\neg]p_1 \wedge [\neg] p_2))$. By the definition of $R'$, this means that all four worlds in this bottom level B will have direct access to all the four worlds in middle level M as well as access in two steps to all four worlds in upper level U.\\
\end{description}

\noindent
Note that $R'$ is transitive because $\mathbf{AX^{\Phi,M}_{GL}}$ extends {\bf GL}, so for all maximal consistent sets $\Gamma$ and all formulas $\psi \in L(\Phi)$, we have that $\Box\psi \rightarrow \Box\Box\psi\in\Gamma$.
Also $R'$ is irreflexive. Because each world contains either $\Box\bot$ and $\neg\bot$ (for U), or  $\Box\Box\bot $ and $\neg \Box\bot$ (for M), or $\Box\Box\Box\bot$ and $\neg\Box\Box\bot$ (for B), by definition of $R'$, none of the worlds has a relation to itself.\\

\noindent
The next step is to prove by induction that a {\em truth lemma} holds: For all $\psi$ in the language $L(\Phi)$ and for all maximal $\mathbf{AX^{\Phi,M}_{GL}}$-consistent sets $\Gamma$, the following holds:

\begin{quote} 
$\mathrm{MC}^{\Phi}_{GL}, w_{\Gamma} \models \psi$ iff $\psi \in \Gamma$.\\
\end{quote}
\noindent
 For atoms, this follows by the definition of $V'$. The steps for the propositional connectives are as usual, using the properties of maximal consistent sets (see Proposition~\ref{maxconsistent}).\\
 
 \noindent
 For the $\Box$-step, let $\Gamma$ be a maximal  $\mathbf{AX^{\Phi,M}_{GL}}$-consistent set and let  us suppose as induction hypothesis that for some arbitrary formula $\chi$, for all maximal $\mathbf{AX^{\Phi,M}_{GL}}$-consistent sets $\Pi$, $\mathrm{MC}^{\Phi}_{GL}, w_{\Pi} \models \chi$ iff $\chi \in \Pi$. We want to show that $\mathrm{MC}^{\Phi}_{GL}, w_{\Gamma} \models \Box\chi$ iff $\Box\chi \in \Gamma$. 

For the direction from right to left, suppose that $\Box\chi\in \Gamma$, then by definition of $R'$, for all $\Pi$ with $w_{\Gamma} R' w_{\Pi}$, we have $\chi\in\Pi$, so by induction hypothesis, $\mathrm{MC}^{\Phi}_{GL}, w_{\Pi}\models \chi$. Therefore, by the truth definition, $\mathrm{MC}^{\Phi}_{GL}, w_{\Gamma}\models \Box\chi$.

For the direction from left to right, let us use contraposition and suppose that $\Box\chi\not\in \Gamma$. 
Now we will show that the set $\{\xi \mid \Box\xi\in \Gamma\}\cup \{\neg \chi\}$ is $\mathbf{AX^{\Phi,M}_{GL}}$-consistent. For otherwise, there would be some $\xi_1,\ldots,\xi_n$ for which $\Box \xi_1, \ldots, \Box \xi_n \in \Gamma$ such that $\xi_1,\ldots,\xi_n \vdash_{\mathbf{AX^{\Phi,M}_{GL}}} \; \chi$, so by necessitation, A2, and propositional logic, $\Box\xi_1,\ldots,\Box\xi_n \vdash_{\mathbf{AX^{\Phi,M}_{GL}}} \; \Box\chi$, therefore by maximal consistency of $\Gamma$ and Proposition~\ref{maxconsistent}(iv), also $\Box\chi\in\Gamma$, contradicting our assumption. Therefore, by Lindenbaum's lemma there is a maximal consistent set $\Pi \supseteq \{\xi \mid \Box\xi\in \Gamma\}\cup \{\neg \chi\}$. It is clear by definition of $R'$ that $w_{\Gamma} R' w_{\Pi}$, and by induction hypothesis, $\mathrm{MC}^{\Phi}_{GL}, w_{\Pi} \models \neg\chi$, i.e., $\mathrm{MC}^{\Phi}_{GL}, w_{\Pi} \not\models\chi$, so by the truth definition, $\mathrm{MC}^{\Phi}_{GL}, w_{\Gamma} \not\models\Box\chi$. This finishes the inductive proof of the truth lemma.\\

\noindent
Finally, from the truth lemma and the fact stated at the beginning of the proof of 1 $\Rightarrow$ 2 that $\neg\varphi \in \Delta$, we have that $\mathrm{MC}^{\Phi}_{GL}, w_{\Delta} \not \models \varphi$, so we have found our counter-model.\\

\noindent
With its three layers (Upper, Middle, and  Bottom) of four worlds each, corresponding to each consistent conjunction of literals, and with each world corresponding to maximal consistent sets containing axioms C1 and C2 and therefore being related to precisely all those worlds in the layers above, the model $\mathrm{MC}^{\Phi}_{GL}$ that we construct in the completeness proof above is isomorphic to the canonical asymptotic model $\mathrm{M}^{\Phi}_{GL}$ of Definition~\ref{Canonical-GL}; for $\Phi=\{p_1, p_2\}$, see Figure 2.\\

\noindent
{\bf 2 $\Rightarrow$ 3}\\ Suppose that $\mathbf{AX^{\Phi,M}_{GL}}\vdash \varphi$. We will show that the axioms of $\mathbf{AX^{\Phi,M}_{GL}}$ hold in almost all irreflexive transitive Kleitman-Rothschild models of depth 3 (see Subsection~\ref{KR}).  First, it is immediate that {\bf GL} is sound with respect to {\em all} finite irreflexive transitive converse well-founded models, that axiom $\Box\Box\Box\bot$ is sound with respect to those  of depth 3, and that almost sure model validity is closed under MP and Necessitation. It remains to show the almost sure model validity of axiom schemes C1 and C2 over finite irreflexive models of the Kleitman-Rothschild variety.\\
 
 \noindent
We will now sketch a proof that the `Carnap-like' axiom C1, namely $\Diamond \top \rightarrow \Diamond A$ where $A$ is a consistent conjunction of literals over $\Phi$, is valid in almost all irreflexive transitive models $(W,R,V)$ of depth 3 of the Kleitman-Rothschild variety with an arbitrary $V$. 
Let us suppose that $\Phi=\{p_1,\ldots,p_k\}$, so there are $2^k$ possible valuations. Let us consider a state $s$ in such a model of $n$ elements where $\Diamond\top$ holds; then, being a Kleitman-Rothschild model, $s$ has as direct successors approximately half of the states in the directly higher layer, which contains asymptotically at least $\frac{1}{4}$ of the model's states. So $s$ has asymptotically at least $\frac{1}{8}\cdot n$ direct successors. The probability that a given state $t$ is a direct successor of $s$ with the right valuation to make $A$ true is therefore at least $\frac{1}{8}\cdot \frac{1}{2^k} = \frac{1}{2^{k+3}}$. Thus, the probability that $s$ does not have {\em any} direct successors in which $A$ holds is at most $(1- \frac{1}{2^{k+3}})^n$. Therefore, the probability that there is at least one $s$ in a Kleitman-Rothschild model not having any direct successors satisfying $A$ is at most $n\cdot (1- \frac{1}{2^{k+3}})^n$. By a standard calculus result, $\lim_{n\to\infty} n\cdot (1- \frac{1}{2^{k+3}})^n = 0$, so C1 is valid in almost all Kleitman-Rothschild models,  i.e., $\lim_{n\to\infty} \nu_{n,\Phi}(\Diamond \top \rightarrow \Diamond A)=1$.\\

\noindent
Similarly, we sketch a proof that axiom C2, namely $\Diamond\Diamond \top \rightarrow \Diamond (B \wedge \Diamond C)$ where $B, C$ are consistent conjunctions of literals over $\Phi$, is valid in almost all irreflexive transitive Kleitman-Rothschild models $(W,R,V)$ of depth 3 with an arbitrary $V$. Let $\Phi=\{p_1,\ldots,p_k\}$. Again, let us consider a state $s$ in such a model of $n$ elements where $\Diamond\Diamond\top$ holds, then $s$ is in the bottom of the three layers; therefore, the model being of Kleitman-Rothschild type, $s$ has as direct successors approximately half of the states in the middle layer, which contains asymptotically at least $\frac{1}{2}$ of the model's states. So $s$ has asymptotically at least $\frac{1}{4}\cdot n$ direct successors. 

The probability that a given state $t$ is a direct successor of $s$ with the right valuation to make $B$ true is therefore at least $\frac{1}{4}\cdot \frac{1}{2^k} = \frac{1}{2^{k+2}}$. Similarly, given such a $t$, the probability that a given state $t'$ in the top layer is a direct successor of $t$ in which $C$ holds is asymptotically at least $\frac{1}{2^{k+2}}\cdot \frac{1}{2^{k+3}}= \frac{1}{2^{2k+5}}$ Therefore, the probability that for the given $s$ there are {\em no} $t, t'$ with $sRtRt'$ with $B$ true at $t$ and $C$ true at $t'$ is at most $(1-\frac{1}{2^{2k+5}})^n$. Summing up, the probability that there is at least one $s$ in a Kleitman-Rothschild model not having any pair of successors $sRtRt'$ with $B$ true at $t$ and $C$ true at $t'$ is at most $n\cdot (1-\frac{1}{2^{2k+5}})^n$. Again, $\lim_{n\to\infty} n\cdot (1-\frac{1}{2^{2k+5}})^n=0$, so C2 holds in almost all Kleitman-Rothschild models, i.e. $\lim_{n\to\infty} \nu_{n,\Phi}(\Diamond\Diamond \top \rightarrow \Diamond (B \wedge \Diamond C)) = 1$.\\

\noindent
{\bf 3 $\Rightarrow$ 4}\\
 Obvious, because $0 \not = 1$.\\
 
 \noindent
{\bf 4 $\Rightarrow$ 1}\\ By contraposition. Suppose as before that $\Phi=\{p_1,\ldots,p_k\}$. Now suppose that  the canonical asymptotic model $\mathrm{M}^{\Phi}_{GL}\not\models \varphi$ for some $\varphi \in L(\Phi)$, say, $\mathrm{M}^{\Phi}_{GL}, s\not\models \varphi$, for some $s  \in W$. We claim that almost surely for a sufficiently large finite Kleitman-Rothschild type model $M' = (W', R', V')$ of three layers with $V'$ random, there is a bisimulation relation $Z$ from $\mathrm{M}^{\Phi}_{GL}$ to $M'$. 
We now sketch how to define the 
bisimulation $Z$. 
 
 Asymptotically, we will be able to find in $M'$ a world $s'$ that is situated at the same layer (top, middle or bottom) as the layer where $s$ is in $\mathrm{M}^{\Phi}_{GL}$ and that has the same valuation for all atoms $p_1, \ldots, p_k$. First, fix an enumeration of all $2^k$ possible valuations: $v_1, \ldots, v_{2_k}$. For each $b_{v_i}$ with valuation $v_i$ (where $i\in \{1,\ldots,2^k\}$) in the bottom layer of $\mathrm{M}^{\Phi}_{GL}$, there will almost surely be a $b'_i\in W'$ that has the same valuation $v_i$, as well as direct successors $m'_{i,1},\ldots,m'_{i,2^k}$ corresponding to valuations $v_1, \ldots, v_{2^k}$ respectively, where each $m'_{i,j}$ in turn has $2^k$ direct successors $u'_{i,j,1}, \ldots, u'_{i,j,2^k}$ corresponding to valuations $v_1, \ldots, v_{2^k}$ respectively.

Take relation $Z$ given by: for all $i, j, l \in \{1,\ldots,2^k\}$, $b_{v_i} Z b'_i$ and $m_{v_j} Z m'_{i,j}$ and $u_{v_l} Z u'_{i,j,l}$. This $Z$ satisfies the three conditions for bisimulations for all $w\in W, w'\in W'$: If $wZw'$, then $w$ and $w'$ have the same valuation; the `forth' condition that $wZw'$ and $wRv$ together imply that there is a $v'\in W'$ such that $w'R'v'$ and $vZv'$; and the `back' condition that $wZw'$ and $w'R'v'$ together imply that there is a $v\in W$ such that $wRv$ and $vZv'$. 

Now that the bisimulation $Z$ is given, suppose that $sZs'$ for some $s'\in W'$. By the bisimulation theorem~\cite{Benthem1983}, we have that for all $\psi\in L(\Phi)$,  $\mathrm{M}^{\Phi}_{GL}, s\models\psi \Leftrightarrow M',s'\models\psi$, in particular, $M'\not\models\varphi$. Conclusion: $\lim_{n\to\infty} \nu_{n,\Phi}(\varphi)=0$. 
\\

\noindent
We can now conclude that all of 1, 2, 3, 4 are equivalent. Therefore, each modal formula in $L(\Phi)$ is either almost surely valid or almost surely invalid over finite models in $\mathcal{GL}$.
\end{proof}

\noindent
This concludes our investigation of validity in almost all models. For almost sure frame validity, it turns out that there is transfer between validity in the countable irreflexive Kleitman Rothschild frame and almost sure frame validity.

\section{The countable random irreflexive Kleitman-Rothschild frame}
\label{Random}

Differently than for the system {\bf K}~\cite{goranko2003}, it turns out that in logics for transitive partial (strict) orders such as {\bf GL}, we can prove transfer between validity of a sentence in almost all relevant finite frames and validity of the sentence in one specific frame, namely the countably infinite random irreflexive Kleitman Rothschild frame. Let us start by introducing this frame step by step.




\noindent
The following definition specifies a first-order theory in the language of strict (irreflexive asymmetric) partial orders. We have adapted it from Compton's~\cite{compton1988} set of extension axioms $T_\mathrm{as}$ (where the subscript ``{\it as}'' stands for ``
almost sure'') for reflexive partial orders of the Kleitman-Rothschild form, which were in turn inspired by Gaifman's and Fagin's extension axioms for almost all first-order models with a binary relation~\cite{gaifman1964,fagin1976}.

\begin{definition}[Extension axioms]
\label{Extension}
The theory $T_\mathrm{as\mbox{-}irr}$\footnote{Here, the subscript {\it as-irr} stands for ``almost sure - irreflexive''.} includes the axioms for strict partial orders, namely, $\forall x \neg(x < x)$ and $\forall x, y, z((x < y \wedge y < z) \rightarrow x < z)$. In addition, it includes the following:
\begin{align}
\tag{Depth-at-least-3} 
& \exists x_0, x_1, x_2,(\bigwedge_{i\leq 1} x_i < x_{i+1})\label{Threelayers-02}
\end{align}
\begin{align}
\tag{Depth-at-most-3} 
& \neg \exists x_0, x_1, x_2, x_3 (\bigwedge_{i\leq 2} x_i < x_{i+1})\label{Threelayers-2}
\end{align}
Every strict partial order satisfying Depth-at-least-3 and Depth-at-most-3 can be partitioned into the three levels $L_1$ (Bottom), $L_2$ (Middle), and $L_3$ (Upper) as in Subsection~\ref{KR} and these levels are first-order definable. Let us describe the extension axioms. 

\noindent
For every $j, k, l \geq 0$ there is an extension axiom saying that for all distinct $x_0,\ldots,x_{k-1}$ and $y_0,\ldots,y_{j-1}$ in $L_2$ and all distinct $z_0,\ldots,z_{l-1}$ in $L_1$, there is an element $z$ in $L_1$ not equal to $z_0,\ldots,z_{l-1}$ such that:\vspace{-0.05cm}
\begin{align}
\tag{a}
& \bigwedge_{i < k}   z < x_i \; \wedge  \bigwedge_{i <j}   \neg(z < y_i) \label{Extension-a}
\end{align}
For every $j, k, l \geq 0$ there is an axiom saying that for all distinct $x_0,\ldots,x_{k-1}$ and $y_0,\ldots,y_{j-1}$ in $L_2$ and all distinct $z_0,\ldots,z_{l-1}$ in $L_3$, there is an element $z$ in $L_3$ not equal to $z_0,\ldots,z_{l-1}$ such that:
\begin{align}
\tag{b}
& \bigwedge_{i < k}   x_i < z \; \wedge  \bigwedge_{i <j}   \neg(y_i < z) \label{Extension-b}
\end{align}
For every $j, j', k, k', l \geq 0$ there is an axiom saying that for all distinct $x_0,\ldots,x_{k-1}$ and $y_0,\ldots,y_{j-1}$ in $L_1$ and all distinct $x'_0,\ldots,x'_{k'-1}$ and $y'_0,\ldots,y'_{j'-1}$ in $L_3$, and all distinct $z_0,\ldots,z_{l-1}$ in $L_2$, there is an element $z$ in $L_2$ not equal to $z_0,\ldots,z_{l-1}$ such that:
\begin{align}
\tag{c}
& \bigwedge_{i < k}   x_i < z \; \wedge  \bigwedge_{i <j}   \neg(y_i < z)  \; \wedge \bigwedge_{i < k'}   z < x'_i \; \wedge  \bigwedge_{i <j'}   \neg(z < y'_i) \vspace{-0.2cm} \label{Extension-c}
\end{align}\vspace{-0.3cm}
\end{definition}
\noindent
\begin{proposition} 
\label{Categorical}
$T_\mathrm{as\mbox{-}irr}$ is $\aleph_0$-categorical and therefore also complete, because it has no finite models.
\end{proposition}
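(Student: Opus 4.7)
The plan is to apply the Łoś-Vaught test: show $T_{\mathrm{as\mbox{-}irr}}$ is $\aleph_0$-categorical and has no finite models, from which completeness follows automatically. Finiteness is ruled out cheaply: instantiating extension axiom (a) with $k = j = 0$ and $l$ arbitrary shows $L_1$ has infinitely many elements; symmetrically (b) forces $L_3$ to be infinite, and (c) with $k = j = k' = j' = 0$ forces $L_2$ to be infinite. Hence every model has cardinality at least $\aleph_0$, so in particular there are no finite models.

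For $\aleph_0$-categoricity I would give a standard back-and-forth argument. Fix two countable models $\mathcal{M}, \mathcal{N} \models T_{\mathrm{as\mbox{-}irr}}$ with fixed enumerations of their domains, and construct inductively a chain $f_0 \subseteq f_1 \subseteq \cdots$ of finite partial isomorphisms, alternating forth and back steps so that the union is a bijection. Since Depth-at-least-3 and Depth-at-most-3 make $L_1, L_2, L_3$ first-order definable from $<$ alone (minimal elements, elements with both a predecessor and a successor, and maximal non-minimal elements), any $<$-preserving partial map automatically preserves level membership, so the only work per step is to preserve the order $<$.

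The extension step is the heart of the argument. Suppose $a \in \mathcal{M}$ is the next element to accommodate and $a \in L_1^{\mathcal{M}}$; the other cases are analogous. Partition $\mathrm{dom}(f_n) \cap L_2^{\mathcal{M}}$ into the set $\{x_i\}$ of elements above $a$ and the set $\{y_i\}$ of elements not above $a$, and let $\{z_i\}$ list $\mathrm{range}(f_n) \cap L_1^{\mathcal{N}}$. Applying extension axiom (a) inside $\mathcal{N}$ to $f_n(x_i), f_n(y_i), z_i$ yields a fresh $c \in L_1^{\mathcal{N}}$ distinct from every $z_i$ and with exactly the desired pattern of $<$-relations, and setting $f_{n+1}(a) = c$ extends $f_n$ to a partial isomorphism. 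Elements in $L_2^{\mathcal{M}}$ and $L_3^{\mathcal{M}}$ are handled with axioms (c) and (b) respectively, and the back steps are entirely symmetric.

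The main obstacle is to check that the axioms really have the combinatorial shape the back-and-forth requires: at each stage we must both replicate the full $<$-pattern from the new witness to all previously placed elements at adjacent levels, and avoid collisions with previously placed witnesses at the same level. The explicit freshness parameters $z_0, \ldots, z_{l-1}$ together with the positive/negative partitions $x_i$/$y_i$ and $x'_i$/$y'_i$ in (a), (b), (c) supply exactly what is needed. Once this is in hand, $f = \bigcup_n f_n$ is an isomorphism $\mathcal{M} \cong \mathcal{N}$, establishing $\aleph_0$-categoricity, and since $T_{\mathrm{as\mbox{-}irr}}$ is formulated in a countable language and has no finite models, the Łoś-Vaught test delivers completeness.
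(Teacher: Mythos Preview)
Your proposal is correct and follows essentially the same approach the paper invokes: the paper simply cites Compton's back-and-forth proof for the reflexive case and says the irreflexive adaptation is straightforward, which is exactly the argument you spell out (extension axioms supply the witnesses for each forth/back step, and \L o\'s--Vaught gives completeness).

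One small imprecision worth tightening: the sentence ``any $<$-preserving partial map automatically preserves level membership'' is not literally true---a one-point partial map can send an $L_1$ element to an $L_2$ element. What you actually need (and what your construction already delivers) is to maintain level preservation as an explicit invariant of the $f_n$, since each extension axiom produces its witness in the correct layer. Relatedly, when you extend by some $a\in L_1$ you only discuss its relations to $L_2$ elements of $\mathrm{dom}(f_n)$; you should also note that relations to $L_3$ elements are automatic, because axiom (c) with $k=k'=1$ plus transitivity forces every $L_1$ element below every $L_3$ element in any model of $T_{\mathrm{as\mbox{-}irr}}$. These are cosmetic fixes; the argument is sound.
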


\noindent
{\bf Proof sketch} 
Straightforward adaptation from Compton's reflexive to our irreflexive orders of his proof that his $T_\mathrm{as}$  is $\aleph_0$-categorical and therefore also complete~\cite[Theorem 3.1]{compton1988b}. 

Let us call the unique countably infinite model of $T_\mathrm{as\mbox{-}irr}$ by the name $\mathcal{F}_{KR}$: the countable random irreflexive Kleitman-Rothschild frame.

\begin{proposition}
\label{Axalmost}
Each of the sentences in $T_\mathrm{as\mbox{-}irr}$ has labeled asymptotic probability 1 in the class of finite strict (irreflexive) partial orders.
\end{proposition}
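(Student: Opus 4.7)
The plan is to verify the axioms of $T_\mathrm{as\mbox{-}irr}$ in two groups: the ``structural'' ones (strict partial order, Depth-at-least-3, Depth-at-most-3) and the extension axioms (a), (b), (c). For the structural group, the claim is immediate from the Kleitman--Rothschild theorem recalled in Subsection~\ref{KR}: almost all finite strict partial orders on $\{1,\dots,n\}$ have exactly the three-layer shape with $|L_1|,|L_3| \sim n/4$ and $|L_2| \sim n/2$, so that the two depth axioms have labeled asymptotic probability 1 (and irreflexivity and transitivity hold by the choice of class). This takes care of everything except the extension axioms.

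For the extension axioms I would run the standard ``random-graph style'' union bound in three slightly different flavors, one per axiom (a), (b), (c). Concentrating on (a) as a representative case: fix $j,k,l \geq 0$ and fix distinct parameter tuples $\bar x$, $\bar y$ in $L_2$ and $\bar z$ in $L_1$. The Kleitman--Rothschild analysis shows that, conditional on the three-layer structure, for a random candidate $z \in L_1\setminus\{z_0,\dots,z_{l-1}\}$, the events ``$z < x_i$'' and ``$\neg(z < y_i)$'' are (asymptotically) independent and each occurs with probability tending to $\tfrac12$. Hence the probability that a given candidate $z$ fulfils the $j+k$ required edge/non-edge constraints is at least $2^{-(j+k)}(1-o(1))$, and since $|L_1| - l \sim n/4$, the probability that \emph{no} such $z$ exists is at most
\[
\bigl(1 - 2^{-(j+k)}(1-o(1))\bigr)^{\,n/4 - l},
\]
which tends to $0$ exponentially fast in $n$. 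Union-bounding over the at most $O(n^{j+k+l})$ choices of parameter tuples leaves a bound of the form $n^{j+k+l}\cdot (1-c)^{n/4}$, which still tends to $0$. The same template works for (b) (witness sought in $L_3$, $|L_3|\sim n/4$) and for (c) (witness sought in $L_2$, $|L_2|\sim n/2$, with twice as many edge constraints to satisfy, but all still bounded in number).

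A last step is to combine: the failure of $T_\mathrm{as\mbox{-}irr}$ means the failure of one of the structural axioms or of some extension-axiom instance. The structural-axiom failure probability tends to $0$ by Kleitman--Rothschild, and each individual extension axiom's failure probability tends to $0$ by the bound above, so each sentence in $T_\mathrm{as\mbox{-}irr}$ has labeled asymptotic probability $1$ in the class of finite strict partial orders.

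The main obstacle is the ``asymptotic independence'' step. One cannot simply declare that edges between consecutive layers are independent Bernoulli$(\tfrac12)$ random variables, because the underlying probability space is the uniform distribution on labeled finite strict partial orders, not the Erd\H{o}s--R\'enyi model on the bipartite graphs between layers. The right way to handle this is to cite (or mimic) Kleitman and Rothschild's~\cite{kleitman1975} refined counting: among the overwhelming majority of three-layer posets, the number with a prescribed edge/non-edge pattern on any fixed finite set of pairs is, up to lower-order terms, $2^{-(\text{pattern size})}$ times the total count. This is exactly the content Compton~\cite{compton1988b} uses for the reflexive case, and the adaptation to irreflexive orders is bookkeeping; once in hand, the union bound above goes through verbatim.
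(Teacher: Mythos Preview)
Your proposal is correct and follows essentially the same approach as the paper, which in fact gives only a one-line proof sketch deferring to Compton's~\cite[Theorem 3.2]{compton1988b} argument for the reflexive case and calling the irreflexive adaptation ``straightforward.'' You have actually written out more of the argument than the paper does, and you correctly isolate the one nontrivial ingredient (the asymptotic-independence / refined-counting input from Kleitman--Rothschild that Compton packages) rather than glossing over it.
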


\noindent
{\bf Proof sketch} Straightforward adaptation to our irreflexive orders of Compton's proof that his $T_\mathrm{as}$ has labeled asymptotic probability 1 in reflexive partial orders~\cite[Theorem 3.2]{compton1988b}.\up\\

\noindent
Notice that by extension axiom (c) and transitivity, in almost all Kleitman-Rothschild models, all points in the bottom layer $L_1$ are connected to all points in the top layer $L_3$.\\

\noindent
Now that we have shown that the extension axioms hold in $\mathcal{F}_{KR}$  
as well as in almost all finite strict partial orders, we have enough background to be able to prove the modal zero-one law with respect to the class of finite irreflexive transitive frames corresponding to provability logic.

\section{Validity in almost all finite irreflexive transitive frames}
\label{Frames}

Take $\Phi=\{p_1,\ldots, p_k\}$ or $\Phi=\{p_i \mid i\in \mathbb{N}\}$. The axiom system $\mathbf{AX^{\Phi,F}_{GL}}$ corresponding to validity in almost all finite frames of provability logic has the same axioms and rules as {\bf GL}, plus the following axiom schemas, for all $k\in \mathbb{N}$, where all $\varphi_i \in L(\Phi)$:
\begin{align}
\tag{T3} 
&\Box\Box\Box\bot \label{Threelayers}\\
\tag{DIAMOND-k}
& \Diamond \Diamond \top \wedge \bigwedge_{i\leq k} \Diamond(\Diamond \top \wedge \Box \varphi_i )\rightarrow \Box (\Diamond \top \rightarrow \Diamond ( \bigwedge_{i\leq k} \varphi_i)) \label{Diamond}\\
\tag{UMBRELLA-k}
& \Diamond\Diamond \top \wedge  \bigwedge_{i\leq k}\Diamond (\Box\bot \wedge \varphi_i ) \rightarrow \Diamond (\bigwedge_{i\leq k} \Diamond \varphi_i)\label{Umbrella}
\end{align}

\noindent
Here, UMBRELLA-0 is the formula $\Diamond\Diamond \top \wedge  \Diamond (\Box\bot\wedge  \varphi_0) \rightarrow \Diamond \Diamond \varphi_0$, which represents the property that direct successors of bottom layer worlds are never endpoints but have at least one successor in the top layer.

The formula DIAMOND-0 has been inspired by the well-known axiom $\Diamond\Box\varphi \rightarrow \Box\Diamond\varphi$ that characterizes confluence, also known as the diamond property: for all $x,y,z$, if $xRy$ and $xRz$, then there is a $w$ such that $yRw$ and $zRw$.

Note that in contrast to the theory $\mathbf{AX^{\Phi,M}_{GL}}$ introduced in Section~\ref{GL-models}, the axiom system $\mathbf{AX^{\Phi,F}_{GL}}$  gives
a normal modal logic, closed under uniform substitution.

Also notice that $\mathbf{AX^{\Phi,F}_{GL}}$ is given by an infinite set of axioms. It turns out that if we base our logic on an infinite set of atoms $\Phi=\{p_i \mid i \in  \mathbb{N}\}$, then for each $k\in  \mathbb{N}$, DIAMOND-k+1 and UMBRELLA-k+1 are strictly stronger than DIAMOND-k and UMBRELLA-k, respectively. So we have two infinite sets of axioms that both strictly increase in strength, thus by a classical result of Tarski, the modal theory generated by $\mathbf{AX^{\Phi,F}_{GL}}$ is not finitely axiomatizable.




For the proof of the zero-one law for frame validity, we will again need a completeness proof, this time of $\mathbf{AX^{\Phi,F}_{GL}}$ with respect to almost sure frame validity, including use of Lindenbaum's lemma 
and finitely many maximal $\mathbf{AX^{\Phi,F}_{GL}}$-consistent sets of formulas, each intersected with a finite  set of relevant formulas $\Lambda$. 

Below, we will define the {\em closure} of a sentence $\varphi\in L(\Phi)$. We may view
this closure as the set of formulas that are relevant  for making a (finite) 
countermodel against $\varphi$. 

\begin{definition}[Closure of a formula]
\label{Closure}
The {\em closure} of $\varphi$ with respect to $\mathbf{AX^{\Phi,F}_{GL}}$ is the
minimal set  $\Lambda$ of $\mathbf{AX^{\Phi,F}_{GL}}$-formulas such that:

\begin{enumerate}
\item $\varphi\in\Lambda$.
\item $\Box\Box\Box\bot\in\Lambda$.
\item If $\psi\in\Lambda$ and $\chi$ is a subformula of $\psi$, then
$\chi\in\Lambda$.
\item If $\psi\in\Lambda$ and $\psi$ itself is not a negation, then
$\neg\psi\in\Lambda$.
\item If $\Diamond \psi\in\Lambda$ and $\psi$ itself is not of the form $\Diamond\xi$ or $\neg\Box\xi$, 
then $\Diamond\Diamond\psi\in\Lambda$, and also $\Box\neg\psi$, $\Box\Box\neg\psi \in \Lambda$. 
\item If $\Box \psi\in\Lambda$ and $\psi$ itself is not of the form $\Box\xi$ or $\neg\Diamond\xi$, 
then $\Box\Box\psi\in\Lambda$, and also $\Diamond\neg\psi$, $\Diamond\Diamond\neg\psi \in \Lambda$. 

\end{enumerate}
\end{definition}

\noindent
Note that $\Lambda$ is a finite set of formulas, of size polynomial in the length of the formula $\varphi$ from which it is built.




\begin{definition}\label{precedence}
Let $\Lambda$ be a closure as defined above and let $\Delta, \Delta_1, \Delta_2$ be maximal $\mathbf{AX^{\Phi,F}_{GL}}$-consistent sets. We define:
\begin{itemize}
\item
 $\Delta^{\Lambda}:= \Delta \cap\Lambda$;
 \item $\Delta_1 \prec \Delta_2 $ iff for all $\Box\chi \in \Delta_1$, we have $\chi \in \Delta_2$;
 \item $\Delta_1^{\Lambda} \prec \Delta_2^{\Lambda}$ iff $\Delta_1 \prec \Delta_2$.
 \end{itemize}
\end{definition}



\begin{theorem}
\label{GL-trees}
For every formula $\varphi \in L(\Phi)$, the following are equivalent:
\begin{enumerate}
\item $\mathbf{AX^{\Phi,F}_{GL}} \vdash \varphi$;
\item $ \mathcal{F}_{KR}\models \varphi$;
\item $\lim_{n\to\infty} \mu_{n,\Phi}(\varphi) = 1$;
\item $\lim_{n\to\infty} \mu_{n,\Phi}(\varphi) \not = 0$.
\end{enumerate}
\end{theorem}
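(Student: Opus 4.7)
The plan is to establish the circle of implications $1 \Rightarrow 3 \Rightarrow 4 \Rightarrow 2 \Rightarrow 1$, paralleling the proof for model validity in the previous section but with the axioms UMBRELLA-$k$ and DIAMOND-$k$ now doing the work that C1 and C2 did before. I expect $3 \Rightarrow 4$ to be immediate, so the substance lies in the other three steps.

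For $1 \Rightarrow 3$ (soundness in almost all finite frames), I would show that each axiom schema is almost surely frame-valid on finite irreflexive transitive orders. The base logic $\mathbf{GL}$ is sound on every such frame, while T3 is sound on frames of depth at most three, which is almost all of them by Kleitman--Rothschild. For UMBRELLA-$k$, fix any valuation and any world $w$ satisfying $\Diamond\Diamond\top \wedge \bigwedge_{i\le k}\Diamond(\Box\bot \wedge \varphi_i)$: then $w\in L_1$ and has $L_3$-successors $t_0,\dots,t_k$ verifying the $\varphi_i$, and extension axiom (c) of Definition~\ref{Extension}, applied with $x'_0,\dots,x'_k = t_0,\dots,t_k$, delivers an $L_2$-world $m$ with $wRm$ and $mRt_i$ for each $i$, so $m$ witnesses $\Diamond(\bigwedge_{i\le k}\Diamond\varphi_i)$. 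A symmetric argument for DIAMOND-$k$ uses extension axiom (b): given a bottom world $w$ with middle witnesses $m_0,\dots,m_k$ each forcing $\Box\varphi_i$, and any middle $v$ with $wRv$, axiom (b) produces a common top successor $t$ of $v,m_0,\dots,m_k$, at which every $\varphi_i$ holds. Since the extension axioms are asymptotically valid (Proposition~\ref{Axalmost}), so are the modal axioms.

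For $4 \Rightarrow 2$ (the transfer from ``not almost surely false'' to validity in $\mathcal{F}_{KR}$) I would argue by contraposition: assume $\mathcal{F}_{KR}\not\models\varphi$, so some $(\mathcal{F}_{KR},V^*),s^*\not\models\varphi$. Because $\varphi$ has only finite modal depth and uses only finitely many atoms, its failure is witnessed by a finite three-layer sub-configuration of $\mathcal{F}_{KR}$ whose shape is determined by finitely many of the extension-axiom instances from $T_\mathrm{as\text{-}irr}$. By Proposition~\ref{Axalmost} those instances hold in almost all sufficiently large finite strict partial orders, so almost every such frame contains a generated subframe isomorphic (on the relevant layer structure) to this configuration; equipping it with an extended valuation and pulling back through the bisimulation theorem yields $\mu_{n,\Phi}(\varphi)\to 0$.

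For $2 \Rightarrow 1$ (completeness with respect to $\mathcal{F}_{KR}$), I would adapt the model-case canonical construction to a filtration over the finite closure $\Lambda$ of Definition~\ref{Closure}. Assume $\mathbf{AX^{\Phi,F}_{GL}}\not\vdash\varphi$, pick a maximal consistent $\Delta\ni\neg\varphi$, and form the finite set of $\Lambda$-restrictions $\Pi^{\Lambda}$ of maximal consistent sets, partitioned into three layers by whether they contain $\Box\bot$, $\neg\Box\bot\wedge\Box\Box\bot$, or $\neg\Box\Box\bot\wedge\Box\Box\Box\bot$ (exhaustive by T3). Using the standard existence lemma one shows each layer-$n$ type sees enough layer-$(n{+}1)$ types, while DIAMOND-$k$ and UMBRELLA-$k$, applied with the $\varphi_i$ chosen as characteristic formulas of the relevant $\Lambda$-types, provide exactly the upward and downward confluence that distinguishes Kleitman--Rothschild frames. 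A truth lemma on $\Lambda$ then gives a finite countermodel with the KR shape; using the extension axioms holding in $\mathcal{F}_{KR}$, one chooses for each canonical type a world of $\mathcal{F}_{KR}$ realizing the required atomic valuation and accessibility pattern, defines a bisimulation, and transfers the failure of $\varphi$ to $\mathcal{F}_{KR}$.

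The main obstacle I anticipate is the completeness step $2 \Rightarrow 1$, and specifically the verification that DIAMOND-$k$ and UMBRELLA-$k$, as syntactically given, suffice to supply the confluence required to embed the finite canonical structure into $\mathcal{F}_{KR}$. The delicate point is that these axioms quantify only over tuples $(\varphi_0,\dots,\varphi_k)$ of formulas, so one must design characteristic formulas for $\Lambda$-types carefully enough that the axiom instance one needs is literally the axiom instance one gets; getting the bookkeeping right (together with handling the case $\Phi$ infinite by filtration so that $\Lambda$ stays finite) is where the technical work will concentrate.
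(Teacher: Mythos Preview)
Your cycle $1\Rightarrow 3\Rightarrow 4\Rightarrow 2\Rightarrow 1$ reorders the paper's $1\Rightarrow 2\Rightarrow 3\Rightarrow 4\Rightarrow 1$. The soundness argument and the completeness-by-countermodel-plus-embedding are essentially the same in both routes, merely aimed at different targets ($\mathcal{F}_{KR}$ versus almost all finite frames, both of which satisfy the extension axioms). The real difference lies in the transfer step. The paper's $2\Rightarrow 3$ has a short proof via compactness and $\aleph_0$-categoricity of $T_{\mathrm{as\text{-}irr}}$ (the Kolaitis--Vardi trick): if $\mathcal{F}_{KR}\models\varphi$ but $T_{\mathrm{as\text{-}irr}}\cup\{\exists x\,\neg\varphi^*(P_1,\dots,P_n)\}$ were finitely satisfiable over the expanded vocabulary, its countable model would reduct to $\mathcal{F}_{KR}$, a contradiction; hence $\forall x\,\varphi^*$ follows from finitely many extension axioms and therefore holds asymptotically.

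Your $4\Rightarrow 2$, by contrast, has a gap as written. You say the failure in $\mathcal{F}_{KR}$ is witnessed by a ``finite three-layer sub-configuration'' and that almost every large frame ``contains a generated subframe isomorphic to this configuration''. But if $s^*$ lies in $L_1$ or $L_2$, its generated subframe in any Kleitman--Rothschild frame is large, never isomorphic to a fixed finite object; and containing the configuration merely as a subgraph is useless for modal transfer because the back clause of bisimulation fails on the extra successors you have not controlled. What is actually required is a bisimulation from a finite quotient onto a model on a generated subframe of the target, and setting that up is exactly the non-trivial work the paper carries out in its step $4\Rightarrow 1$(b): the step-by-step countermodel is deliberately built with one ``universal'' middle world $s_{\Psi_1}$ (via UMBRELLA-$k$) and one common top successor $s_{\Xi_0}$ of all middle worlds (via DIAMOND-$k$), precisely so that every unassigned middle or top point of the target frame can be absorbed and the back condition goes through. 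Your ordering does not avoid this embedding work; it only relocates it into $4\Rightarrow 2$ without acknowledging it, and your filtration-style $2\Rightarrow 1$ would likewise need to produce a countermodel with such designated absorbing worlds before any embedding into $\mathcal{F}_{KR}$ could succeed.
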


\begin{proof} We show a circle of implications. Let $\varphi\in L(\Phi)$.\\

\noindent
{\bf 1 $\Rightarrow$ 2}\\ 
  Suppose $\mathbf{AX^{\Phi,F}_{GL}} \vdash \varphi$. Because finite irreflexive Kleitman-Rothschild frames are finite strict partial orders that have no chains of length greater than $3$, the axioms and theorems of {\bf GL} + $\Box\Box\Box\bot$ hold in all Kleitman-Rothschild frames, therefore they are valid in $\mathcal{F}_{KR}$. 

So we only need to check the validity of the DIAMOND-k and UMBRELLA-k axioms in $\mathcal{F}_{KR}$ for all $k\geq 0$.\\

\noindent
DIAMOND-k-1: Fix $k\geq 1$,  take sentences $\varphi_i\in L(\Phi)$ for $i=1,\ldots, k-1$ and let  $\varphi= \Diamond \Diamond \top \wedge \bigwedge_{i\leq k-1} \Diamond(\Diamond \top \wedge \Box \varphi_i)\rightarrow \Box (\Diamond \top \rightarrow \Diamond ( \bigwedge_{i\leq k-1} \varphi_i))$. By Proposition~\ref{Categorical}, we know that each of the extension axioms of the form (b)  holds in $ \mathcal{F}_{KR}$. 
We want to show that $\varphi$ is valid in $\mathcal{F}_{KR}$.

To this end, let $V$ be any valuation  on the set of labelled states $W$ of $\mathcal{F}_{KR}$ and let $M= (\mathcal{F}_{KR}, V)$. Now take an arbitrary $b\in W$ and suppose that $M, b \models \Diamond \Diamond \top \wedge \bigwedge_{i\leq k-1} \Diamond(\Diamond \top \wedge \Box \varphi_i)$. Then $b$ is in the bottom layer $L_1$ and there are  worlds $x_0,\ldots,x_{k-1}$ (not necessarily distinct) in the middle layer $L_2$ such that for all $i \leq k-1$, we have $b <  x_i$  and $M, x_i \models \Box \varphi_i$. Now take any $x_{k}$ in $L_2$ with $b < x_{k}$. Then, by the extension axiom (b), there is an element $z$ in the upper layer $L_3$ such that $\bigwedge_{i \leq k}   x_i < z $. Now for that $z$, we have that $M, z \models \bigwedge_{i \leq k-1} \varphi_i$. But then $M, x_{k} \models \Diamond( \bigwedge_{i\leq k-1} \varphi_i)$, so because $x_{k}$ is an arbitrary direct successor of $b$, we have $M, b\models \Box (\Diamond \top \rightarrow \Diamond ( \bigwedge_{i\leq k-1} \varphi_i))$. To conclude, \vspace{-0.1cm} \[M, b \models 
 \Diamond \Diamond \top \wedge \bigwedge_{i\leq k-1} \Diamond(\Diamond \top \wedge \Box \varphi_i) \rightarrow \Box (\Diamond \top \rightarrow \Diamond ( \bigwedge_{i\leq k-1} \varphi_i)),\] so because $b$ and $V$ were arbitrary, we have \vspace{-0.1cm} \[\mathcal{F}_{KR}\models  \Diamond \Diamond \top \wedge \bigwedge_{i\leq k-1} \Diamond(\Diamond \top \wedge \Box \varphi_i) \rightarrow \Box (\Diamond \top \rightarrow \Diamond ( \bigwedge_{i\leq k-1} \varphi_i)),\vspace{-0.2cm}\] as desired.\\
 
 \noindent
 UMBRELLA-k-1: Fix $k\geq 1$,  take sentences $\varphi_i\in L(\Phi)$ for $i=1,\ldots, k-1$ and let  $\varphi=  \Diamond\Diamond \top \wedge  \bigwedge_{i\leq k-1}\Diamond (\Box\bot \wedge \varphi_i )  \rightarrow \Diamond (\bigwedge_{i\leq k-1} \Diamond \varphi_i)$.
 By Proposition~\ref{Categorical}, 
 we know that each of the extension axioms of the form (c)  holds in $\mathcal{F}_{KR}$. 
 We want to show that $\varphi$ is valid in $\mathcal{F}_{KR}$.

To this end, let  $V$ be any valuation  on the set of labelled states  $W$ 
of $\mathcal{F}_{KR}$ and let $M= (\mathcal{F}_{KR}, V)$. Now take an arbitrary  $b\in W$ and suppose that $M, b \models  \Diamond\Diamond \top \wedge  \bigwedge_{i\leq k-1}\Diamond (\Box\bot \wedge \varphi_i )$. Then $b$ is in the bottom layer $L_1$ and there are accessible worlds $x_0,\ldots, x_{k-1}$ (not necessarily distinct) in the upper layer $L_3$ such that for all $i \leq k-1$, we have $b < x_i$ and  $M, x_i\models \varphi_i$. By the extension axiom (c) from Definition~\ref{Extension}, there is an element $z$ in the middle layer $L_2$ such that $b < z$ and for all $i\leq k-1$, $z < x_i$. But that means that $M,z \models \bigwedge_{i\leq k-1} \Diamond \varphi_i$, therefore $M,b \models \Diamond (\bigwedge_{i\leq k-1} \Diamond \varphi_i)$. In conclusion, \vspace{-0.1cm} \[M,b \models  \Diamond\Diamond \top \wedge  \bigwedge_{i\leq k-1}\Diamond (\Box\bot \wedge \varphi_i )  \rightarrow \Diamond (\bigwedge_{i\leq k-1} \Diamond \varphi_i),\] so because $b$ and $V$ were arbitrary, we have \vspace{-0.1cm} \[\mathcal{F}_{KR}\models  \Diamond\Diamond \top \wedge  \bigwedge_{i\leq k-1}\Diamond (\Box\bot \wedge \varphi_i) \rightarrow \Diamond (\bigwedge_{i\leq k-1} \Diamond \varphi_i),\vspace{-0.2cm}\] as desired.\\

\noindent
{\bf 2 $\Rightarrow$ 3}\\ Suppose $\mathcal{F}_{KR}\models\varphi$. Using Van Benthem's translation (see Subsection~\ref{Benthem}), we can translate this as a $\Pi^1_1$ sentence being true in $\mathcal{F}_{KR}$ (viewed as model of the relevant second-order language): Universally quantify over predicates corresponding to all propositional atoms occurring in $\varphi$, to get a sentence of the form $\chi:=\forall P_1,\ldots,P_n \; \forall x \varphi^\ast$, where $\forall x \varphi^\ast$ is a first-order sentence. The claim is that $\forall x\varphi^\ast$  follows from a finite set of the extension axioms. 
Following the procedure of Kolaitis and Vardi~\cite[Lemma 4]{kolaitis1987}, we can prove that if every finite subset of $T_\mathrm{as-irr}\cup\{\exists x\neg\varphi^\ast(P_1,\ldots,P_n)\}$ is satisfiable over the extended vocabulary with $P_1, \ldots, P_n$, then by compactness, $T_\mathrm{as-irr}\cup\{\exists x\neg\varphi^\ast(P_1, \ldots,P_n)\}$ has a countable model over the extended vocabulary. The reduct of this model to the old language is still a countable model of $T_\mathrm{as-irr}$, and is therefore isomorphic to $\mathcal{F}_{KR}$ (by Proposition~\ref{Categorical}). But then $\mathcal{F}_{KR}\models\exists P_1,\ldots,P_n\exists x\neg\varphi^\ast$,  a contradiction. \\

\noindent
{\bf 3 $\Rightarrow$ 4}\\ Obvious, because $0 \not = 1$.\\


\noindent
{\bf 4 $\Rightarrow$ 1}\\
By contraposition. Let $\varphi\in L(\Phi)$ and suppose that $\mathbf{AX^{\Phi,F}_{GL}} \not \vdash \varphi$. Then $\neg \varphi$ is $\mathbf{AX^{\Phi,F}_{GL}}$-consistent. We will do a completeness proof by the finite step-by-step method 
(see, for example,~\cite{burgess1984,JonghVV}), but based on infinite maximal consistent sets, each of which is intersected with the same finite set of relevant formulas $\Lambda$, so that the constructed counter-model remains finite (see~\cite{Verbrugge1988}, \cite[footnote 3]{Joosten2020}).

In the following, we are first going to  construct a model
$M_{\varphi}=(W, R, V)$ that will 
contain a world where $\neg \varphi$ holds (Step 4 $\Rightarrow$ 1 (a)). Then we will embed this model into Kleitman-Rothschild frames of any large enough size to show that $\lim_{n\to\infty} \mu_{n,\Phi}(\varphi) = 0$ (Step 4 $\Rightarrow$ 1 (b)).\\

\noindent
{\bf Step 4 $\Rightarrow$ 1 (a)}\\
\noindent
By the Lindenbaum Lemma, 
we can extend $\{\neg \varphi\}$ to a maximal $\mathbf{AX^{\Phi,F}_{GL}}$-consistent set $\Psi$. Now define $\Psi^{\Lambda}:= \Psi \cap \Lambda$, where $\Lambda$ is as in Definition~\ref{Closure}, and introduce a world $s_\Psi$ corresponding to $\Psi^{\Lambda}$. 




We distinguish three cases for the step-by-step construction: {\bf U} (upper layer), {\bf M} (middle layer), and {\bf B} (bottom layer).\\

\noindent
{\bf Case U, with } {\boldmath$\Box\bot\in\Psi^{\Lambda}$}:\\ In this case we are done: a one-point counter-model $M_\varphi=(W,R,V)$ with $W=\{s_\Psi\}$, empty $R$, and for all $p\in\Phi$, $V_{s_\Psi}(p)$ iff $p\in\Psi^\Lambda$ suffices. \\

\noindent 
{\bf Case M, with } {\boldmath$\Box\bot\not\in \Psi^{\Lambda}$}, {\boldmath$\Box\Box\bot\in \Psi^{\Lambda}$}:\\ Let $\Diamond \psi_1,\ldots,\Diamond \psi_n$ be an enumeration of all the formulas of the form $\Diamond \psi$ in $\Psi^{\Lambda}$. Note that for all these formulas, $\Diamond \Diamond \psi_i \not\in \Psi^{\Lambda}$, because $\Box\Box\bot\in \Psi^{\Lambda}$. Take an arbitrary one of the $\psi_i$ for which $\Diamond\psi_i \in \Psi^{\Lambda}$. Claim: the set 
\[\Delta_i:=\{\Box \chi, \chi \mid \Box \chi \in \Psi  \} \cup \{\psi_i, \Box \neg \psi_i\}\] is $\mathbf{AX^{\Phi,F}_{GL}}$-consistent. For if not, then
 \[\{\Box \chi, \chi \mid \Box \chi \in \Psi \} \vdash_{\mathbf{AX^{\Phi,F}_{GL}}} \Box\neg\psi_i \rightarrow \neg\psi_i.\] 
 Because proofs are finite, there is a finite set $\chi_1,\ldots,\chi_k$ with $\Box\chi_1,\ldots \Box\chi_k\in\Psi$ and
  \[\{\Box \chi_j, \chi_j \mid j \in \{1,\ldots, k\} \} \vdash_{\mathbf{AX^{\Phi,F}_{GL}}} \Box\neg\psi_i \rightarrow \neg\psi_i.\] 
 Using necessitation, we get 
 \[\{\Box\Box \chi_j, \Box \chi_j \mid  j \in \{1,\ldots, k\} \} \vdash_{\mathbf{AX^{\Phi,F}_{GL}}} \Box(\Box\neg\psi_i \rightarrow \neg\psi_i).\]
Because we have $\vdash_{\mathbf{AX^{\Phi,F}_{GL}}} \Box\chi_j \rightarrow \Box\Box \chi_j$ for all  $j =1,\ldots, k$  and $\vdash_{\mathbf{AX^{\Phi,F}_{GL}}} \Box(\Box\neg\psi_i \rightarrow \neg\psi_i)\rightarrow \Box\neg \psi_i$, we can conclude:
 \[\{\Box \chi \mid \Box \chi \in \Psi \} \vdash_{\mathbf{AX^{\Phi,F}_{GL}}} \Box \neg\psi_i.\] Using 
 Proposition~\ref{maxconsistent}(4) and the fact that $\Box\neg\psi_i\in\Lambda$, this leads to $\Box\neg\psi_i\in \Psi^{\Lambda}$, contradicting our assumption that $\Diamond \psi_i\in \Psi^{\Lambda}$.
Also note that because $\Box\Box\bot\in \Psi$, by definition, $\Box\bot\in \Delta_i$.
We can now extend $\Delta_i$ to a maximal $\mathbf{AX^{\Phi,F}_{GL}}$-consistent set $\Psi_i$ by the Lindenbaum Lemma, and we define for each $i \in \{1, \ldots, n\}$ the set $\Psi_i^{\Lambda}:= \Psi_i \cap\Lambda$ and a world $s_{\Psi_i}$ corresponding to it. 
We have for all $i\in \{1, \ldots, n\}$ that $\Psi^{\Lambda} \prec \Psi_i^{\Lambda}$ as well as $\psi_i, \Box \neg\psi_i \in \Psi_i^{\Lambda}$.\\ 

\noindent
We have now finished creating a two-layer counter-model 
$M_{\varphi}= (W, R, V)$, which has:

\begin{itemize}
\item $W = \{s_{\Psi},s_{\Psi_1},\ldots,s_{\Psi_n}\}$; 
\item $R= \{\langle s_{\Psi}, s_{\Psi_i}\rangle \mid i\in\{1,\ldots,n\}\}$;
\item For each $p\in\Phi$ and $s_{\Gamma}\in W$: $V_{s_{\Gamma}}(p)=1$ iff $p\in \Gamma^{\Lambda}$.
\end{itemize}

\noindent
A truth lemma can be proved as in Case B below (but easier).\\

\noindent
{\bf Case B, with } {\boldmath$\Box\Box\bot\not\in \Psi^{\Lambda}$}:\\ In this case, we also look at all formulas of the form $\Diamond\psi \in \Psi^{\Lambda}$. 
We first divide this into two sets, as follows:
\begin{enumerate}
\item The set of $\Diamond$-formulas in $\Psi^{\Lambda}$ for which we have that $\Diamond \xi_{k+1}, \dots,\Diamond\xi_l \in \Psi^{\Lambda}$ but $\Diamond\Diamond \xi_{k+1}, \dots,\Diamond\Diamond\xi_l\not \in \Psi^{\Lambda}$ for some $l\in \mathbb{N}$, so $\Box\Box\neg \xi_{k+1}, \dots,\Box\Box\neg\xi_l \in \Psi^{\Lambda}$.\footnote{The formulas of the form $\Box\Box\neg \xi_{j}$ are in $\Lambda$ because of  Def.~\ref{Closure}, clause 5.}
\item The set of $\Diamond\Diamond$-formulas with $\Diamond\Diamond \xi_1, \dots,\Diamond\Diamond\xi_k\in  \Psi^{\Lambda}$.\\ Note that for these formulas, we also have $\Diamond \xi_1, \dots,\Diamond\xi_k\in \Psi^{\Lambda}$, because $GL \vdash \Diamond\Diamond \xi_i \rightarrow \Diamond \xi_i$. We will treat these pairs $\Diamond\Diamond\xi_i, \Diamond\xi_i$ for $i=1, \ldots,k$  at the same go.
\end{enumerate}

\noindent
Note that (1) and (2) lead to disjoint sets which together exhaust the $\Diamond$-formulas in $\Psi^{\Lambda}$. Altogether, that set now contains $\{\Diamond \xi_1, \ldots, \Diamond\xi_k, \Diamond\Diamond \xi_1, \ldots, \Diamond \Diamond\xi_k, \Diamond \xi_{k+1}, \dots,\Diamond\xi_l  \}$.\\

\noindent
Let us first check the formulas of type (1): $\Diamond \xi_{k+1},\ldots, \Diamond \xi_l\in\Psi^{\Lambda}$, but $\Box \Box \neg\xi_{k+1}, \ldots, \Box \Box \neg\xi_{l} \in \Psi^{\Lambda}$. We can now show by similar reasoning as in Case M that for each $i\in\{k+1,\ldots, l\}$, $\Delta_i= \{\Box \chi, \chi \mid \Box \chi \in \Psi \} \cup \{ \xi_i, \Box\neg \xi_i\}$ is $\mathbf{AX^{\Phi,F}_{GL}}$-consistent, so we can extend them to maximal $\mathbf{AX^{\Phi,F}_{GL}}$-consistent sets $\Psi_i$ 
and define $\Psi_i^{\Lambda}:=\Psi_i \cap \Lambda$ with 
$\Psi^{\Lambda} \prec \Psi_i^{\Lambda}$, and corresponding worlds $s_{\Psi_i}$ for all $i\in\{k+1, \ldots, l\}$.

We now claim that for all $i\in \{k+1,\ldots, l\}$, the world $s_{\Psi_i}$ is not in the top layer of the model with root  $s_\Psi$. 
To derive a contradiction, suppose that it is in the top layer, so  $\Box \bot \in \Psi_i^{\Lambda}$. Then 
also  $\Box\bot \wedge \xi_i \in \Psi_i$, 
so because 
$\Psi\prec \Psi_i$, we have $\Diamond (\Box\bot \wedge \xi_i) \in \Psi$. 
By UMBRELLA-0, we know 
that \vspace{-0.1cm}\[\vdash_{\mathbf{AX^{\Phi,F}_{GL}}} \Diamond \Diamond \top \wedge \Diamond (\Box \bot  \wedge \xi_i) \rightarrow \Diamond\Diamond\xi_i.\vspace{-0.1cm}\] Also having $\Diamond\Diamond\top\in\Psi$, we can now use Proposition~\ref{maxconsistent}(4) to conclude that  $\Diamond\Diamond \xi_i\in \Psi$. Therefore, because  $\Diamond\Diamond \xi_i\in \Lambda$, we also have $\Diamond\Diamond \xi_i\in \Psi^{\Lambda}$, contradicting our starting assumption that $\Diamond \xi_i$ is a type (1) formula. We conclude that $\Box \bot \not\in \Psi_i^{\Lambda}$, therefore,  $s_{\Psi_i}$ is in the middle layer. 

Let us now look for each of these $s_{\Psi_i}$ with $i$ in $k+1,\ldots, l$, which direct successors in the top layer they require. Any formulas of the form $\Diamond\chi\in\Psi_i^{\Lambda}$ have to be among the formulas $\Diamond\xi_1,\ldots,\Diamond\xi_k$ of type (2), for which $\Diamond\Diamond\xi_1,\ldots,\Diamond\Diamond\xi_k\in\Psi^{\Lambda}$.
Suppose $\Diamond\xi_j\in \Psi_i$ for some $j$ in $1,\ldots, k$ and $i$ in $k+1,\ldots, l$. Then we can show (just like in Case M) that there is a maximal consistent set $X_{i, j}$ with $\Psi_i \prec X_{i, j}$ and $\xi_j,\Box\bot \in X_{i, j}^{\Lambda}$. The world in the top layer corresponding to $X_{i,j}^{\Lambda}$ will be called $s_{X_{i, j}}$. Because $X_{i, j}^{\Lambda}$ is finite, we can describe it by $\Box\bot$ and a finite conjunction of literals, which we represent as $\chi_{i, j}$. For ease of reference in the next step, let us define:\\
 
 \noindent
$A:=\{\langle i,j\rangle\!\mid\!\Diamond\xi_j\in\!\Psi_i\!\mbox{ with }\!i\in\!k+1,\ldots,l\mbox{ and }
j\in\!1,\ldots,k\}$\\

\noindent
For the formulas of type (2), we have $\Diamond\Diamond\xi_i\in \Psi^{\Lambda}$. Moreover, we have for each $i\in \{1,\ldots, k\}$:\vspace{-0.1cm} \[GL+\Box\Box\Box\bot\vdash \Diamond\Diamond\xi_i \rightarrow \Diamond(\Box\bot \wedge \xi_i).\]
Therefore, by maximal $\mathbf{AX^{\Phi,F}_{GL}}$-consistency of $\Psi$, 
we have by Proposition~\ref{maxconsistent} that $\Diamond(\Box\bot \wedge \xi_i) \in \Psi$ for each $i\in \{1,\ldots, k\}$. 
We also have $\Diamond\Diamond \top\in\Psi$. UMBRELLA-k now gives us \[ \Psi\vdash_{\mathbf{AX^{\Phi,F}_{GL}}} \Diamond \Diamond \top \wedge \bigwedge_{i= 1,\ldots, k} \Diamond(\Box\bot \wedge \xi_i ) \rightarrow\]
\[ \Diamond ( \bigwedge_{i= 1,\ldots, k}\Diamond \xi_i)\]
  We conclude from maximal $\mathbf{AX^{\Phi,F}_{GL}}$-consistency of $\Psi$ and Proposition~\ref{maxconsistent}(4) that $\Diamond ( \bigwedge_{i= 1,\ldots, k} \Diamond \xi_i)\in \Psi$.

This means that we can construct {\em one} direct successor of $\Psi^{\Lambda}$ containing all the  $\Diamond \xi_i$ for $i\in \{1,\ldots, k\}$. To this end, let 
\[\Delta_1:= \{\Box \chi, \chi \mid \Box \chi \in \Psi \} \cup 
\{\Diamond \xi_1, \ldots, \Diamond \xi_k\} \]

\noindent
Claim: $\Delta_1$ is $\mathbf{AX^{\Phi,F}_{GL}}$-consistent. For if not, we would have:
\[\{\Box \chi, \chi \mid \Box \chi \in \Psi  \} \vdash_{\mathbf{AX^{\Phi,F}_{GL}}} \neg ( \bigwedge_{i= 1,\ldots, k} \Diamond \xi_i)\]
But then by the same reasoning as we used before (``boxing both sides" and using $GL\vdash \Box \chi \rightarrow \Box\Box\chi$) we conclude that
\[\{\Box \chi \mid \Box \chi \in \Psi  \} \vdash_{\mathbf{AX^{\Phi,F}_{GL}}} \Box\neg  (\bigwedge_{i= 1,\ldots, k} \Diamond \xi_i).\]
This contradicts $\Diamond (\bigwedge_{i= 1,\ldots, k}\Diamond\xi_i)\!\in\!\Psi$, which we showed above. Now that we know $\Delta_1$ to be $\mathbf{AX^{\Phi,F}_{GL}}$-consistent, 
we can extend it by the Lindenbaum Lemma 
to a maximal $\mathbf{AX^{\Phi,F}_{GL}}$-consistent set, 
 which we call $\Psi_1 \supseteq \Delta_1$. Let  $s_{\Psi_1}$ be the world corresponding to $\Psi_1^{\Lambda}$, with 
 $\Psi^{\Lambda}\prec \Psi_1^{\Lambda}$. 

Now we can use the same method as in Case M to find the required 
direct successors of $\Psi_1^{\Lambda}$. Namely, for all $i\in\{1,\ldots,k\}$ we find maximal $\mathbf{AX^{\Phi,F}_{GL}}$-consistent sets $\Xi_i$ 
and let $s_{\Xi_i}$ be the worlds corresponding to the $\Xi_i^{\Lambda}$, with  $\Psi_1^{\Lambda}\prec \Xi_i^{\Lambda}$ and $\xi_i\in \Xi_i$.\\


\noindent
We have now handled making direct successors of $\Psi^{\Lambda}$ for all the formulas of type (1) and type (2).
We can then finish off the step-by-step construction for Case B by populating the upper layer U using one appropriate restriction to $\Lambda$ of a maximal consistent set $\Xi_0$, 
as follows. We note that  $\Box\neg \xi_{i}\in \Psi_{i}^{\Lambda}$ for $i$ in $k+1, \ldots, l$, and that $\Box\Box\bot\in \Psi_1^{\Lambda}$. Let us take the following instance of the DIAMOND-(l-k) axiom scheme:
\[\Diamond \Diamond \top \wedge \bigwedge_{i\in \{k+1,\ldots, l\}} \Diamond(\Diamond \top \wedge \Box \neg \xi_i )\rightarrow \]
\[  \Box (\Diamond \top \rightarrow \Diamond ( \bigwedge_{i\in \{k+1,\ldots, l\}}  \neg \xi_i))\]
Now we have $\Diamond \Diamond \top \in \Psi^{\Lambda}$. Because $\Psi \prec \Psi_i$ and $\Diamond\top \wedge \Box\neg\xi_i\in \Psi_i$ for all $i$ in $k+1, \ldots, l$, we derive that

\[\bigwedge_{i\in \{k+1,\ldots, l\}} \Diamond(\Diamond \top \wedge \Box \neg \xi_i )\in \Psi.\]
  Now  
  by  one more application of Proposition~\ref{maxconsistent}(4), we have \[\Box (\Diamond \top \rightarrow \Diamond ( \bigwedge_{i\in \{k+1,\ldots, l\}}  \neg \xi_i))\in \Psi.\] 
   Because $\Psi \prec \Psi_j$ and $\Diamond\top\in\Psi_j$ for all $j$ in $1, k+1, \ldots, l$, we conclude 
   that  \[\Diamond \top \rightarrow \Diamond ( \bigwedge_{i\in \{k+1,\ldots, l\}}  \neg \xi_i)\in \Psi_j \mbox { for all } j\in\{1,k+1, \ldots, l\}.\] 
Now we can find one world $s_{\Xi_0}$ corresponding to $\Xi_0^{\Lambda}$ such that for all $j$ in $1, k+1,\ldots, l$, we have $\Psi_j^{\Lambda} \prec \Xi_0^{\Lambda}$. And moreover,  $\neg\xi_i  \in \Xi_0^{\Lambda}$ for all $i$ in $k+1,\ldots, l$.\\ 

\noindent
We have now finished creating our finite counter-model $M_\varphi= (W, R, V)$, which has:

\begin{itemize}
\item $W = \{s_{\Psi}\}\cup \{s_{\Psi_1}, s_{\Psi_{k+1}},\ldots, s_{\Psi_{l}}\} \cup$\\
 $\mbox{ }\hspace{0.7cm}\{s_{X_{i,j}} \mid \langle i, j \rangle \in A\}\cup \{s_{\Xi_i} \mid i \in \{1,\ldots,k\}\} \cup\{s_{\Xi_0}\}$
\item $R$ is the transitive closure of:\\ $\{\langle s_{\Psi},s_{\Psi_i}\rangle\mid i\in\{1,k+1,\ldots,l\}\}\cup$\\
 $\{\langle s_{\Psi_i},s_{X_{i,j}}\rangle \mid \langle i, j \rangle \in A\}\cup$\\
 $\{\langle s_{\Psi_1},s_{\Xi_i}\rangle \mid i \in \{1,\ldots,k\}\} \cup $\\
 $\{\langle s_{\Psi_i}, s_{\Xi_0}\rangle\mid  i\in\{1,k+1,\ldots,l\}\}$;
\item For each $p\in\Phi$ and $s_\Gamma \in W: V_{s_{\Gamma}}(p)= 1 \mbox{ iff } p\in \Gamma^{\Lambda}$.
\end{itemize}







\noindent
Now 
we can relatively easily prove a truth lemma, restricted to formulas from $\Lambda$, as follows.\\

\noindent
{\bf Truth Lemma}

\noindent
For all $\psi$ in $\Lambda$ and all 
worlds $s_\Gamma$ in $W$:\\ 
$M_{\varphi},s_\Gamma \models \psi$ iff $\psi \in \Gamma^{\Lambda}$.\\

\noindent
{\bf Proof} By induction on the construction of the formula. For atoms $p\in\Lambda$, the fact that $M_{\varphi}, s_\Gamma\models p$ iff $p \in \Gamma^{\Lambda}$ follows by the definition of $V$.\\

\noindent
{\bf Induction Hypothesis}: Suppose for some arbitrary $\chi,\xi\in \Lambda$, we have that for {\em all} 
worlds $s_\Delta$ in $W$:\\ $M_\varphi, s_\Delta \models \chi$ iff $\chi\in \Delta^{\Lambda}$ and $M_\varphi, s_\Delta \models \xi$ iff $\xi\in \Delta^{\Lambda}$.\\

\noindent
{\bf Inductive step}:
\begin{itemize}
\item {\em Negation}: Suppose $\neg \chi\in \Lambda$. Now by the truth definition, $M_\varphi, s_\Gamma \models \neg \chi$ iff $M_\varphi, s_\Gamma \not\models \chi$. By the induction hypothesis, the latter is equivalent to $\chi\not\in \Gamma^{\Lambda}$. But this in turn is equivalent by Proposition~\ref{maxconsistent}(1) to $\neg \chi\in \Gamma^{\Lambda}$.
\item {\em Conjunction}: Suppose $\chi\wedge\xi\in \Lambda$. Now by the truth definition, $M_\varphi, s_\Gamma\models\chi\wedge\xi$ iff $M_\varphi, s_\Gamma \models \chi$ and $M_\varphi, s_\Gamma \models \chi$. By the induction hypothesis, the latter is equivalent to $\chi\in\Gamma^{\Lambda}$ and $\xi\in\Gamma^{\Lambda}$, which by Proposition~\ref{maxconsistent}(2) is equivalent to $\chi\wedge \xi\in \Gamma^{\Lambda}$. 


\item {\em Box}: Suppose $\Box\chi\in \Lambda$. We know by the induction hypothesis that for all 
sets $\Delta^{\Lambda}$ in $W$, $M_\varphi, s_\Delta\models \chi$ iff $\chi \in \Delta^{\Lambda}$. We want to show that $M_\varphi, s_\Gamma \models \Box\chi$ iff $\Box\chi \in \Gamma^{\Lambda}$. 

For one direction, suppose that $\Box\chi\in \Gamma^{\Lambda}$, then by definition of $R$, for all $s_\Delta$ with $s_\Gamma R s_\Delta$, we have $\Gamma\prec\Delta$ so $\chi\in\Delta^{\Lambda}$, so by induction hypothesis, for all these $s_\Delta$, $M_\varphi, s_\Delta\models \chi$. Therefore by the truth definition, $M_\varphi, s_\Gamma\models \Box\chi$.

For the other direction, suppose that $\Box\chi\in\Lambda$ but  $\Box\chi\not\in \Gamma^{\Lambda}$. 
Then (by Definition~\ref{Closure} and Proposition~\ref{maxconsistent}(4)), we have $\Diamond \neg \chi \in \Gamma^{\Lambda}$.\footnote{or, if $\chi$ is of the form $\neg\chi_1$, then $\Diamond\chi_1\in\Gamma$, with $\Diamond\chi_1$ logically equivalent to $\Diamond \neg\chi$; in that case we reason further with $\Diamond\chi_1$.} 
Then in the step-by-step construction, in Case {\bf M} or Case {\bf B}, we have constructed a maximal $\mathbf{AX^{\Phi,F}_{GL}}$-consistent set $\Xi$ 
with $\Gamma \prec \Xi$ and $s_\Gamma R s_\Xi$, with $\neg \chi\in \Xi$, thus $\neg \chi\in \Xi^{\Lambda}$. Now by the induction hypothesis, we have $M_\varphi, s_\Xi \not\models \chi$, so by the truth definition, $M_\varphi, s_\Gamma \not\models \Box\chi$.\\
\end{itemize}



\noindent
Finally, from the truth lemma and the fact above that $\neg \varphi \in \Psi^{\Lambda}$, we have $M_\varphi, s_\Psi \not \models \varphi$, so we have found our counter-model.\\

\noindent
{\bf Step 4 $\Rightarrow$ 1 (b)}\\
\noindent
Now we need to show that  $\lim_{n\to\infty} \mu_{n,\Phi}(\varphi) = 0$.\\

\noindent
We claim that almost surely for a sufficiently large finite Kleitman-Rothschild type frame $F' = (W', R')$ of three layers, there is a bisimulation relation $Z$ from $M_\varphi=(W,R,V)$ defined in the (a) part of this step to $F'$, such that the image is a generated subframe  $F''=(W'', R'')$ of $F'$, with $R''=R' \cap W''$. We will define a valuation $V'$ on $F'$ and define $V''$ as the restriction of $V'$ to $W''$ and such that for all $w\in W, w''\in W''$: If $wZw''$, then $w$ and $w''$ have the same valuation. 
Then, once the bisimulation $Z$ is given, suppose that $s_\Psi Zs''$ for some $s''\in W''$. By the bisimulation theorem~\cite{Benthem1983}, we have that for all $\psi\in L(\Phi)$,  $M_\varphi, s_\Psi\models\psi \Leftrightarrow M'',s''\models\psi$, in particular, $M'',s''\not\models\varphi$. Because $M''$ is a generated submodel of $M'=(W',R',V')$, we also have $M',s''\not\models\varphi$.  Conclusion: $\lim_{n\to\infty} \mu_{n,\Phi}(\varphi)=0$.\\

\noindent
We now sketch how to define the above-claimed bisimulation $Z$ from  $M_\varphi$ to a generated subframe of such a sufficiently large Kleitman-Rothschild frame $F' = (W', R')$. There are three cases, corresponding to {\rm Case U}, {\rm Case M}, and {\rm Case B} of the step-by-step construction of the counter-model $M_{\varphi}$ in Step 4 $\Rightarrow$ 1 (a). One by one, we will show that the constructed counter-model can almost surely be mapped by a bisimulation to a generated subframe of a Kleitman-Rothschild frame, as the number of nodes grows large enough.\\

\noindent
{\bf Case U}\\
The one-point counter-model $M_\varphi=(W,R,V)$ against $\varphi$, with $W=\{s_\Psi\}$, can be turned into a counter-model on every three-layer Kleitman-Rothschild frame $F'$ as follows. Take a world $u$ in the top layer of $F'$, define $Z$ by $s_\Psi Z u$ and take a valuation on $L(\Phi)$ that corresponds on that world $u$ with the valuation of world $s_\Psi$ in  $M_\varphi$. 
Then $Z$ is a bisimulation from   $M_\varphi$ to a model on a one-point generated subframe of $F'$. This world provides a counterexample showing $F'\not\models\varphi$.\\

\noindent
{\bf Case M}\\
The two-layer model $M_\varphi$ defined in Case M of part (a) of this step almost surely has as a bisimilar image a generated subframe of  a large enough Kleitman-Rothschild frame $F' =(W', R', V')$, as follows. Take a world $m$ in the middle layer of the Kleitman-Rothschild frame $F'$ with sufficiently many (at least $n$) successors in the top layer, say, $u_1, \ldots, u_m$ with $m\geq n$.  Take $F''=(W'', R'', V'')$ to be the generated upward-closed frame of $F'$ with root $m$. Define a mapping $Z$ by $s_\Psi Z m$, $s_{\Psi_i} Z u_i$ for all $i < n$, and $s_{\Psi_n} Z u_i$ for all $i$ in $n, \ldots, m$. This mapping satisfies the `forth' condition as well as the `back' condition.
Choose the valuation $V''$ on $L(\Phi)$ such that $m$ has the same valuation as $s_\Psi$, while $u_i$ has the same valuation as  $s_{\Psi_i}$ for $i<n$, and $u_i$ has the same valuation as $s_{\Psi_n}$ for all $i$ in $n, \ldots, m$. So $Z$ is a bisimulation.\\

\noindent
{\bf Case B}\\
The three-layer model $M_\varphi= (W, R, V)$ defined in Case B of part (a) of this step 
 can be embedded into almost every sufficiently large Kleitman-Rothschild frame $F'=(W',R')$ in the sense that there is a bisimulation to a model on a  generated subframe $F''$ of $F'$. 
Pick different pairwise distinct elements $u_{i,j}$ for all $\langle i,j\rangle\in A$ and $v_i$ for all $i\in\{1,\ldots,k\}$ in the upper layer $L_3$ of $F'$.\footnote{This is possible because there are at least $k \cdot (l-k) + k$  elements of $L_3$.}  
Now take any $b$ in bottom layer $L_1$. Then by a number of applications of extension axiom (c), we find members $m_1, m_{k+1}, \ldots, m_l$ in the middle layer $L_2$ such that:
 \begin{itemize}
 \item $bR' m_i$ for all $i$ in $1, k+1,\ldots, l$;
 \item $m_i R' u_{i,j}$ for all $\langle i,j\rangle\in A$;
 \item $m_1 R' v_i$ for all $i\in\{1,\ldots,k\}$;
 \item  {\em not} $m_1 R' u_{i,j}$ for any $\langle i,j\rangle\in A$;
 \item {\em not} $m_j R' v_i$ for any $j\in \{ k+1,\ldots, l\}$ and $i\in\{1,\ldots,k\}$;
 \item {\em not} $m_i R' u_{i',j}$ for any $i\in \{ k+1,\ldots, l\}$ and $\langle i',j\rangle\in A$ with $i \not = i'$;
 \end{itemize}
 Finally, by extension axiom (b), there is a $w_0$ in $L_3$ different from all $u_{i,j}$, $v_{i}$ such that $m_i R' w_0$ for all $i$ in $1, k+1, \ldots, l$. Let $F''=(W'',R'')$ be the upward-closed subframe generated by $b$, with $R''=R'\cap W''$.
 
 Now define mapping $Z$ from $M_\varphi$ to $F''$ such that:
 \begin{itemize}
 \item $s_{\Psi} Z b$;
 \item $s_{\Psi_i} Z m_i$ for $i \in \{k+1,\ldots, l\}$;
 \item $s_{X_{i,j}} Z u_{i,j}$ for all $\langle i,j\rangle\in A$;
 \item $s_{\Xi_{i}} Z v_{i}$ for all $i\in\{1,\ldots,k\}$;
\item $s_{\Xi_0} Z v$ for all other $v$ (not of the form $u_{i,j}$) with $m_i R' v$ for $i \in \{k+1,\ldots, l\}$;
\item $s_{\Xi_0} Z w_0$;
\item $s_{\Psi_1} Z m$ for all $m$ in $L_2$ other than $m_{k+1},\ldots, m_l$;
\item Divide the (many) $w\in L_3$ such that for all $i\in\{k+1,\ldots,l\}$ {\em not} $m_i R' w$,  randomly into a partition of about equal-sized subsets, one by one corresponding to $Z$- images of each of $s_{\Xi_0}$, $s_{\Xi_{i}}$ for  each $i\in\{1,\ldots,k\}$.
\end{itemize}
Now define the valuation $V''$ on $F''$ such that for all $p\in \Phi$ and all $s_\Gamma \in W$ and $s''\in W''$, if $s_\Gamma Z s''$, then $V''_{s''}(p)=1$ iff $V_{s_\Gamma}(p)=1$.
Finally, one can check that $Z$ also satisfies the two other conditions for bisimulations:
\begin{itemize}
\item {\bf Forth:} Suppose $s_\Gamma Z s''$ and $s_\Gamma R s_\Delta$. Then case by case, one can show that there is a $v''\in W''$ such that $s''R'' v''$ and $s_\Delta Zv''$; 
\item {\bf Back:} Suppose $s_\Gamma Zs''$ and $s''R'' v''$. Then case by case, one can show that there is an $s_\Delta\in W$ such that $s_\Gamma R s_\Delta$ and $s_\Delta Zv''$. 
\end{itemize}

 \vspace{0.25cm}
\noindent
Hereby we have sketched a proof that on almost all  large enough Kleitman Rothschild frame frames, $\varphi$ is not valid. \vspace{-0.1cm}\\

\noindent
To conclude, all of 1, 2, 3, and 4  are equivalent.
\end{proof}

\section{Complexity of almost sure model and frame satisfiability}
\label{Complexity}
It is well known that the satisfiability problem and the validity problem for {\bf GL} are PSPACE-complete (for a proof sketch, see~\cite{Verbrugge2017}), just like for other well-known modal logics such as {\bf K} and {\bf S4}. In contrast, for enumerably infinite vocabulary $\Phi$, the problem whether $\lim_{n\to\infty} \nu_{n,\Phi}(\varphi) = 0$  is in $\Delta^p_2$ (for the dag-representation of formulas), by adapting~\cite[Theorem 4.17]{halpern1994}. If $\Phi$ is finite, the decision problem whether $\lim_{n\to\infty} \nu_{n,\Phi}(\varphi) = 0$ is even in $P$, because you only need to check validity of $\varphi$ in the fixed finite canonical model $\mathrm{M}^{\Phi}_{GL}$. For example, for $\Phi=\{p_1,p_2\}$, this model contains only 16 worlds, see Figure 2.

The problem whether $\lim_{n\to\infty} \mu_{n,\Phi}(\varphi) = 0$ is in NP, more precisely, NP-complete for enumerably infinite vocabulary $\Phi$. To show that it is in NP, suppose you need to decide whether $\lim_{n\to\infty} \mu_{n,\Phi}(\varphi) = 0$. By the proof of part 4 $\Rightarrow$ 1 of Theorem~\ref{GL-trees}, you can simply guess an at most 3-level irreflexive transitive frame of the appropriate form and of size $< \mid\varphi\mid^3$, a model on it and a world in that model, and check (in polynomial time) whether $\varphi$ is not true in that world. NP-hardness is immediate for $\Phi$ infinite: for propositional $\psi$, we have $\psi\in\mathbf{SAT}$  iff  $\lim_{n\to\infty} \mu_{n,\Phi}(\psi) = 0$. 

In conclusion, if the polynomial hierarchy does not collapse and in particular (as most complexity theorists believe) $\Delta^p_2 \not=$ PSPACE and NP $\not=$ PSPACE, then the problems of deciding whether a formula is {\em almost always} valid in finite models or frames of provability logic are easier than deciding whether it is {\em always} valid. 
For comparison, remember that for first-order logic the difference between validity and almost sure validity is a lot starker still: Grandjean~\cite{grandjean1983} proved that the decidability problem of almost sure validity in the finite is only PSPACE-complete, while the validity problem on {\em all} structures is undecidable~\cite{church1936,turing1937} and the validity problem on all finite structures is not even recursively enumerable~\cite{Trakhtenbrot1950}.

\section{Conclusion and future work}
\label{Discussion}

We have proved zero-one laws for provability logic 
with respect to both model and frame validity. On the way, we have axiomatized validity in almost all relevant finite models and in almost all relevant finite frames, leading to two different axiom systems. If the polynomial hierarchy does not collapse, the two problems of `almost sure model/frame validity' are less complex than `validity in all models/frames'.  

Among finite frames in general, partial orders are pretty rare -- using Fagin's extension axioms, it is easy to show that almost all finite frames are {\em not} partial orders. Therefore, results about almost sure frame validities in the finite do not transfer between frames in general and strict partial orders. Indeed, the logic of frame validities on finite irreflexive partial orders studied here is quite different from the modal logic of the validities in almost all finite frames~\cite{goranko2003,Goranko2020}. 
One of the most interesting results in~\cite{goranko2003} is that frame validity does not transfer from almost all finite $\mathcal{K}$-frames to the countable random frame, although it does transfer in the other direction. In contrast, we have shown that for irreflexive transitive frames, validity does transfer in both directions between almost all finite frames and the countable random irreflexive Kleitman-Rothschild frame.

\subsection{Future work} Currently, we are proving similar 0-1 laws for logics of reflexive transitive frames, such as {\bf S4} and Grzegorczyk logic, axiomatizing both almost sure model validity and almost sure frame validity. It turns out that Halpern and Kapron's claim that there is a 0-1 law for $\mathcal{S}4$ frame validity can still be salvaged, albeit with a different, stronger axiom system, containing two infinite series of umbrella and diamond axioms similar to the ones in the current paper. Furthermore, it appears that one can do the same for logics of transitive frames that may be neither reflexive nor irreflexive, such as {\bf K4} and weak Grzegorczyk logic.

\bibliographystyle{IEEEtran}
\bibliography{aiml18}

\providecommand{\noopsort}[1]{}
\begin{thebibliography}{10}
\providecommand{\url}[1]{#1}
\csname url@samestyle\endcsname
\providecommand{\newblock}{\relax}
\providecommand{\bibinfo}[2]{#2}
\providecommand{\BIBentrySTDinterwordspacing}{\spaceskip=0pt\relax}
\providecommand{\BIBentryALTinterwordstretchfactor}{4}
\providecommand{\BIBentryALTinterwordspacing}{\spaceskip=\fontdimen2\font plus
\BIBentryALTinterwordstretchfactor\fontdimen3\font minus
  \fontdimen4\font\relax}
\providecommand{\BIBforeignlanguage}[2]{{%
\expandafter\ifx\csname l@#1\endcsname\relax
\typeout{** WARNING: IEEEtran.bst: No hyphenation pattern has been}%
\typeout{** loaded for the language `#1'. Using the pattern for}%
\typeout{** the default language instead.}%
\else
\language=\csname l@#1\endcsname
\fi
#2}}
\providecommand{\BIBdecl}{\relax}
\BIBdecl

\bibitem{glebskii1969}
Y.~V. Glebskii, D.~I. Kogan, M.~Liogon'kii, and V.~Talanov, ``Range and degree
  of realizability of formulas in the restricted predicate calculus,''
  \emph{Cybernetics and Systems Analysis}, vol.~5, no.~2, pp. 142--154, 1969.

\bibitem{compton1987}
K.~J. Compton, ``A logical approach to asymptotic combinatorics {I}. first
  order properties,'' \emph{Advances in Mathematics}, vol.~65, pp. 65--96,
  1987.

\bibitem{fagin1976}
R.~Fagin, ``Probabilities on finite models 1,'' \emph{Journal of Symbolic
  Logic}, vol.~41, no.~1, pp. 50--58, 1976.

\bibitem{compton1988b}
K.~J. Compton, ``The computational complexity of asymptotic problems {I}:
  Partial orders,'' \emph{Information and Computation}, vol.~78, no.~2, pp.
  108--123, 1988.

\bibitem{goranko2003}
V.~Goranko and B.~Kapron, ``The modal logic of the countable random frame,''
  \emph{Archive for Mathematical Logic}, vol.~42, no.~3, pp. 221--243, 2003.

\bibitem{Carnap1950}
R.~Carnap, \emph{Logical Foundations of Probability}.\hskip 1em plus 0.5em
  minus 0.4em\relax University of Chicago Press, 1950.

\bibitem{compton1988}
K.~J. Compton, ``0-1 laws in logic and combinatorics,'' in \emph{Proceedings
  1987 {NATO} Advanced Study Institute Algorithms and Order}, I.~Rival,
  Ed.\hskip 1em plus 0.5em minus 0.4em\relax Dordrecht: Reidel, 1988, pp.
  353--383.

\bibitem{kaufmann1987}
M.~Kaufmann, ``A counterexample to the 0-1 law for existential monadic
  second-order logic,'' in \emph{CLI Internal Note 32}.\hskip 1em plus 0.5em
  minus 0.4em\relax Comp. Logic Inc., 1987.

\bibitem{kolaitis1987}
P.~Kolaitis and M.~Vardi, ``The decision problem for the probabilities of
  higher-order properties,'' in \emph{Proceedings 19th Annual ACM Symposium on
  the Theory of Computing (STOC)}, 1987, pp. 425--435.

\bibitem{kolaitis1990}
P.~G. Kolaitis and M.~Y. Vardi, ``0--1 laws and decision problems for fragments
  of second-order logic,'' \emph{Information and Computation}, vol.~87, no.
  1-2, pp. 302--338, 1990.

\bibitem{pacholski1989}
L.~Pacholski and W.~Szwast, ``The 0-1 law fails for the class of existential
  second order {G}\"{o}del sentences with equality,'' in \emph{Proceedings 30th
  Annual Symposium on Foundations of Computer Science (FOCS)}.\hskip 1em plus
  0.5em minus 0.4em\relax IEEE, 1989, pp. 160--163.

\bibitem{kolaitis1992}
P.~G. Kolaitis and M.~Y. Vardi, ``Infinitary logics and 0--1 laws,''
  \emph{Information and Computation}, vol.~98, no.~2, pp. 258--294, 1992.

\bibitem{gurevich1993}
Y.~Gurevich, ``Zero-one laws,'' in \emph{Current Trends In Theoretical Computer
  Science}.\hskip 1em plus 0.5em minus 0.4em\relax World Scientific, 1993, pp.
  293--309.

\bibitem{halpern2006}
J.~Y. Halpern, ``From statistical knowledge bases to degrees of belief: An
  overview,'' in \emph{Proc. 25th ACM SIGMOD-SIGACT-SIGART Symposium on
  Principles of Database Systems}.\hskip 1em plus 0.5em minus 0.4em\relax ACM,
  2006, pp. 110--113.

\bibitem{libkin2013}
L.~Libkin, \emph{Elements of Finite Model Theory}.\hskip 1em plus 0.5em minus
  0.4em\relax Berlin: Springer, 2013.

\bibitem{zaid2017}
F.~A. Zaid, A.~Dawar, E.~Gr{\"a}del, and W.~Pakusa, ``Definability of summation
  problems for abelian groups and semigroups,'' in \emph{Proceedings 32nd
  Annual IEEE Symposium on Logic in Computer Science (LICS)}.\hskip 1em plus
  0.5em minus 0.4em\relax IEEE, 2017, pp. 1--11.

\bibitem{grove1996}
A.~J. Grove, J.~Y. Halpern, and D.~Koller, ``Asymptotic conditional
  probabilities: The unary case,'' \emph{SIAM Journal on Computing}, vol.~25,
  no.~1, pp. 1--51, 1996.

\bibitem{segerberg1971}
K.~Segerberg, ``An essay in classical modal logic,'' Ph.D. dissertation,
  Uppsala University, 1971.

\bibitem{kleitman1975}
D.~J. Kleitman and B.~L. Rothschild, ``Asymptotic enumeration of partial orders
  on a finite set,'' \emph{Transactions of the American Mathematical Society},
  vol. 205, pp. 205--220, 1975.

\bibitem{Henson2017}
J.~Henson, D.~Rideout, R.~D. Sorkin, and S.~Surya, ``Onset of the asymptotic
  regime for (uniformly random) finite orders,'' \emph{Experimental
  Mathematics}, vol.~26, no.~3, pp. 253--266, 2017.

\bibitem{halpern1994}
J.~Y. Halpern and B.~Kapron, ``Zero-one laws for modal logic,'' \emph{Annals of
  Pure and Applied Logic}, vol.~69, no. 2-3, pp. 157--193, 1994.

\bibitem{Benthem1976}
J.~v. Benthem, ``Modal correspondence theory,'' Ph.D. dissertation, University
  of Amsterdam, 1976.

\bibitem{Benthem1983}
------, \emph{Modal Logic and Classical Logic}.\hskip 1em plus 0.5em minus
  0.4em\relax Naples: Bibliopolis, 1983.

\bibitem{halpern1992}
J.~Y. Halpern and B.~M. Kapron, ``Zero-one laws for modal logic,'' in
  \emph{Proceedings 7th Annual IEEE Symposium on Logic in Computer Science
  (LICS)}, 1992, pp. 369--380.

\bibitem{chellas1980}
B.~F. Chellas, \emph{Modal Logic: An Introduction}.\hskip 1em plus 0.5em minus
  0.4em\relax Cambridge: Cambridge University Press, 1980.

\bibitem{Bars2002}
J.-M. Le~Bars, ``The 0-1 law fails for frame satisfiability of propositional
  modal logic,'' in \emph{Proceedings 17th Annual IEEE Symposium on Logic in
  Computer Science (LICS)}, 2002, pp. 225--234.

\bibitem{Goranko2007}
V.~Goranko and M.~Otto, ``Model theory of modal logic,'' in \emph{Handbook of
  Modal Logic}, P.~Blackburn, J.~van Benthem, and F.~Wolter, Eds.\hskip 1em
  plus 0.5em minus 0.4em\relax Elsevier, 2006, pp. 249--329.

\bibitem{Goranko2020}
V.~Goranko, ``The modal logic of almost sure frame validities in the finite,''
  in \emph{Proceedings of the 13th International Conference on Advances in
  Modal Logic (AiML)}, N.~Olivetti and R.~Verbrugge, Eds.\hskip 1em plus 0.5em
  minus 0.4em\relax College Publications, 2020, pp. 249--268.

\bibitem{halpern2003erratum}
J.~Y. Halpern and B.~M. Kapron, ``Erratum to ``zero-one laws for modal logic"
  [{A}nn. {P}ure {A}ppl. {L}ogic 69 (1994) 157--193],'' \emph{Annals of Pure
  and Applied Logic}, vol. 121, no. 2-3, pp. 281--283, 2003.

\bibitem{Verbrugge1995}
R.~Verbrugge, ``Zero-one laws for provability logic (abstract),'' in
  \emph{Proceedings of the Tenth International Congress of Logic, Methodology
  and Philosophy of Science (LMPS)}, 1995, pp. 79--80.

\bibitem{verbrugge2018}
------, ``Zero-one laws with respect to models of provability logic and two
  {G}rzegorczyk logics,'' in \emph{Advances in Modal Logic 2018: Accepted Short
  Papers}, G.~D'Agostino and G.~Bezhanishvilii, Eds., 2018, pp. 115--120.

\bibitem{boolos1993}
G.~Boolos, \emph{The Logic of Provability}.\hskip 1em plus 0.5em minus
  0.4em\relax Cambridge: Cambridge University Press, 1993.

\bibitem{Verbrugge2017}
R.~Verbrugge, ``Provability logic,'' in \emph{The Stanford Encyclopedia of
  Philosophy (Fall 2017 Edition)}, E.~N. Zalta, Ed.\hskip 1em plus 0.5em minus
  0.4em\relax Stanford University, 2017.

\bibitem{smorynski1985}
C.~Smory{\'n}ski, \emph{Self-Reference and Modal Logic}.\hskip 1em plus 0.5em
  minus 0.4em\relax Berlin: Springer Verlag, 1985.

\bibitem{Carnap1947}
R.~Carnap, \emph{Meaning and Necessity}.\hskip 1em plus 0.5em minus 0.4em\relax
  Chicago, IL: University of Chicago Press, 1947.

\bibitem{gaifman1964}
H.~Gaifman, ``Concerning measures in first order calculi,'' \emph{Israel
  Journal of Mathematics}, vol.~2, no.~1, pp. 1--18, 1964.

\bibitem{burgess1984}
J.~P. Burgess, ``Basic tense logic,'' in \emph{Handbook of Philosophical
  Logic}, F.~Guenthner and D.~Gabbay, Eds.\hskip 1em plus 0.5em minus
  0.4em\relax Springer, 1984, pp. 89--133.

\bibitem{JonghVV}
D.~de~Jongh, F.~Veltman, and R.~Verbrugge, ``Completeness by construction for
  tense logics of linear time,'' in \emph{Liber Amicorum for Dick de Jongh},
  L.~Afanasiev and M.~Marx, Eds.\hskip 1em plus 0.5em minus 0.4em\relax
  Amsterdam: Institute of Logic, Language and Computation, 2004.

\bibitem{Verbrugge1988}
R.~Verbrugge, ``Does {S}olovay's completeness theorem extend to bounded
  arithmetic?'' Master's thesis, University of Amsterdam, 1988.

\bibitem{Joosten2020}
J.~J. Joosten, J.~M. Rovira, L.~Mikec, and M.~Vukovi{\'c}, ``An overview of
  generalised {V}eltman semantics,'' \emph{arXiv preprint arXiv:2007.04722},
  2020.

\bibitem{grandjean1983}
E.~Grandjean, ``Complexity of the first-order theory of almost all finite
  structures,'' \emph{Information and Control}, vol.~57, pp. 180--204, 1983.

\bibitem{church1936}
A.~Church, ``A note on the {E}ntscheidungsproblem,'' \emph{The Journal of
  Symbolic Logic}, vol.~1, no.~1, pp. 40--41, 1936.

\bibitem{turing1937}
A.~Turing, ``On computable numbers, with an application to the
  {E}ntscheidungsproblem,'' \emph{Proceedings of the London Mathematical
  Society}, vol.~2, no.~1, pp. 230--265, 1937.

\bibitem{Trakhtenbrot1950}
B.~A. Trakhtenbrot, ``Impossibility of an algorithm for the decision problem in
  finite classes,'' \emph{Doklady Akademii Nauk {SSSR}}, vol.~70, no.~4, pp.
  569--572, 1950, (in Russian).

\end{thebibliography}




\end{document}